\newtheorem{definition}{Definition}
\newtheorem{lemma}{Lemma}
\newtheorem{remark}{Remark}
\newtheorem{theorem}{Theorem}
\newcommand{\fliu}[1]{\ifthenelse{\boolean{showcomments}}
        { \textcolor{red}{(FL:  #1)}}{}}
\newcommand{\xpeng}[1]{\ifthenelse{\boolean{showcomments}}
        { \textcolor{blue}{(XP:  #1)}}{}}
 \newcommand{\xru}[1]{\ifthenelse{\boolean{showcomments}}
        { \textcolor{violet}{(XR:  #1)}}{}}
\renewcommand{\baselinestretch}{0.95}
\begin{document}

\title{Matrix-Valued Passivity Indices: Foundations, Properties, and Stability Implications}

\author{Xi Ru,
~Xiaoyu Peng,
~Xinghua Chen,
~Zhaojian Wang,
~Peng Yang,
~Feng Liu,~\IEEEmembership{Senior Member,~IEEE} 
}


\maketitle

\begin{abstract}
The passivity index, a quantitative measure of a system's passivity deficiency or excess, has been widely used in stability analysis and control. Existing studies mostly rely on scalar forms of indices, which are restrictive for multi-input, multi-output (MIMO) systems. This paper extends the classical scalar indices to a systematic matrix-valued framework, referred to as passivity matrices. A broad range of classical results in passivity theory can be naturally generalized in this framework. We first show that, under the matrix representation, passivity indices essentially correspond to the curvature of the dissipativity functional under a second-variation interpretation. This result reveals that the intrinsic geometric structure of passivity consists of its directions and intensities, which a scalar index cannot fully capture. For linear time-invariant (LTI) systems, we examine the structural properties of passivity matrices with respect to the Loewner partial order and propose two principled criteria for selecting representative matrices. Compared with conventional scalar indices, the matrix-valued indices capture the passivity coupling among different input–output channels in MIMO systems and provide a more comprehensive description of system passivity. This richer information leads to lower passivation effort and less conservative stability assessment.
\end{abstract}

\begin{IEEEkeywords}
Passivity, matrix-valued passivity index, dissipativity, passivation, stability analysis
\end{IEEEkeywords}

%

\section{Introduction}


Dissipativity theory, first formulated in the seminal works of Willems~\cite{willems1972dissipative, willems1972dissipative2}, provides a foundational framework for the analysis and synthesis of complex dynamical systems. It characterizes system behavior through energy exchange with the environment using supply rates and storage functions, and offers a unified viewpoint for studying properties such as stability and passivity~\cite{hill2003stability}. Among the various forms of dissipativity, passivity is particularly significant due to its intimate connection with Lyapunov stability and its distinct structural property: the parallel and feedback interconnections of passive systems remain passive~\cite{van2000l2}.

This inherent robustness has made passivity-based control (PBC) an important paradigm in many engineering domains. In robotics, it enables safe physical interaction between humans and robots~\cite{hannaford2002time}. In power-electronics-dominated power systems, passivity plays a crucial role in analyzing the stability of grid-forming inverters and in preventing high-frequency resonance in renewable energy integration~\cite{han2022passivity, he2024passivity}. In networked control and cyber-physical systems, passivity is also a primary tool for counteracting time delays and packet losses, and for ensuring consensus or synchronization of multi-agent networks~\cite{chopra2006passivity, rajchakit2021robust}.

While the classical binary classification of a system as passive or non-passive provides a sufficient condition for stability, it is often too coarse for precise performance analysis. In practical applications, it is essential to quantify the degree of passivity a system possesses or how far it deviates from being passive. To address this need, passivity indices were introduced, commonly defined as input-feedforward passivity (IFP) and output-feedback passivity (OFP) indices~\cite{bao2007process}.

These indices have played a pivotal role in refining stability criteria. They allow for the stability analysis of feedback interconnections in which one subsystem may exhibit a shortage of passivity while the other provides an excess~\cite{zhu2014passivity}. Beyond simple feedback loops, passivity indices extend naturally to cascade interconnections via the secant criterion~\cite{yu2010passivity} and to symmetrically interconnected distributed systems, thereby enabling stability characterization for both linear and nonlinear dynamics under mixed feedforward and feedback structures~\cite{wu2011passivity}. This compensation principle often yields less conservative conditions than small-gain methods and supports the design of robust controllers for coupled systems~\cite{zakeri2022passivity}. The framework has also been generalized to broader classes of systems, including port-Hamiltonian systems~\cite{monshizadeh2019conditions} and discrete-time implementations~\cite{zhao2016feedback}, and has become indispensable for the passivation of non-passive systems via feedback or feedforward compensation~\cite{bao2007process, zhu2016passivity}.

Passivity indices have further proven effective in cooperative control, where they facilitate the analysis of heterogeneous multi-agent systems and ensure consensus under diffusive couplings~\cite{li2019consensus}. They also play an increasingly important role in modern power systems. At the device level, passivity indices guide the systematic passivation of power-electronic converters, such as current-controlled grid-connected VSCs~\cite{hans2018passivation}. At the network level, they support distributed stability assessment that captures heterogeneous and nonlinear bus dynamics in large-scale power systems~\cite{yang2019distributed}. Collectively, these developments underscore the importance of passivity indices as a rigorous basis for analyzing and enhancing the stability of interconnected and networked systems.

Despite these advances, most existing work is built on scalar passivity indices. In the standard formulation, the passivity level of a multi-input multi-output (MIMO) system is reduced to a single scalar value. In linear systems, this value corresponds to the minimum eigenvalue of the symmetric part of the transfer function; in nonlinear systems, it corresponds to the smallest sector bound~\cite{khalil2002nonlinear}. Recent studies show that such scalar indices cannot capture the directional heterogeneity of MIMO systems. A system may exhibit strong passivity in some input–output channels while remaining fragile in others due to cross-coupling effects. Collapsing this structure into the weakest direction discards essential information. 

This loss of structural information has two main consequences. First, it leads to conservative stability assessments. Scalar conditions implicitly assume isotropic behavior, and as observed in~\cite{chen2024limitations}, this assumption may indicate instability for interconnected grid-inverter systems that are in fact stable. Similar conclusions were drawn in~\cite{liceaga2015mimo}, which showed that scalar passivity indices fail to guarantee robustness in MIMO settings because directional fragility remains hidden. Second, it results in inefficient passivation strategies. Designs based on scalar measures, such as those in~\cite{zhu2014passivity, bao2007process}, often employ uniform compensation gains dictated by the worst-case direction, which imposes unnecessary control effort on channels that are already adequately passive. 

Recent work has also sought to broaden passivity indices beyond classical measures by reshaping the frequency-domain metric~\cite{chen2024extended}. They introduce a frequency-weighting rotation $R(\omega)$ and define extended IFP/OFP margins from the Hermitian part of $R^H(\omega)G(j\omega)$, enabling tailored low-frequency MIMO dynamics specifications. However, these margins remain essentially scalar and frequency-by-frequency, and they do not provide a systematic, direction-resolving matrix index for targeted passivation and less conservative interconnection analysis.

Although matrix-valued sector bounds have long been recognized in classical input-output stability theory \cite{safonov1980stability}, and QSR-dissipativity provides a unifying language for quadratic supply rates, a systematic framework for matrix-valued passivity indices remains undeveloped. It is crucial to distinguish the proposed framework from general QSR analysis. While QSR-dissipativity typically employs fixed weighting matrices to verify stability or performance in a binary manner, matrix-valued passivity indices serve as a quantitative metric of energy excess or shortage relative to the canonical passivity supply rate. However, existing methods lack the capability to explicitly quantify this property along specific directions, and therefore cannot capture the directional structure of MIMO systems. By reformulating these indices as intrinsic matrix-valued properties to be identified, we enable anisotropic passivation thereby overcoming the conservatism of scalar indices and the opacity of general QSR parameters. These motivations lead to the matrix-valued formulation presented in this paper.

The main contributions of this paper are threefold:
\begin{itemize}
    \item \textbf{Matrix-valued generalization of passivity indices.}
    We develop a systematic matrix-valued generalization of classical scalar passivity indices and show that standard passivity results, including interconnection properties, passivation control, and stability analysis, extend cleanly under this framework. The matrix-valued formulation yields improved analytical performance. It reduces the passivation effort by compensating only for deficient passivity directions and yields less conservative stability assessments by capturing the intrinsic multidirectional passivity characteristics of MIMO systems.

    \item \textbf{Intrinsic interpretation of passivity matrices.}
    We provide a geometric interpretation of passivity matrices in terms of the curvature of the dissipative functional, showing that they arise naturally from dissipativity theory. Passivity matrices offer a more faithful characterization of dissipative behavior. Their eigenvectors identify the most critical directions of energy exchange relevant to stability, while the corresponding eigenvalues quantify the energy surplus or deficit along each direction, thereby revealing a rich directional and intensity structure that scalar indices cannot capture. More interestingly, this finding might suggest a possibility of geometricizing the energy structure of dynamical systems.  

    \item \textbf{Structural analysis and computable representatives.}
    For linear time-invariant systems, partial-order analysis reveals that no single matrix can represent all passivity characteristics and that multiple incomparable candidates may coexist. We further propose two principled selection criteria that integrate effectively with LMI-based computation: maximizing the trace and maximizing the minimum eigenvalue.
\end{itemize}


The remainder of this paper is organized as follows. Section~\ref{sec-preliminary} reviews basic concepts from dissipativity theory and introduces the notation used throughout the paper. Section~\ref{sec-matrixindex} presents the matrix-valued formulation of passivity indices and explains its relationship with classical scalar definitions. Section~\ref{sec-physical interpretation} investigates the geometric interpretation and structural properties of passivity matrices, including their partial-order characteristics and selection criteria. Section~\ref{sec-stability} applies the proposed framework to interconnection analysis, passivation control, and stability assessment. Section~\ref{sec-examples} provides numerical examples that illustrate the analytical benefits of the matrix-valued formulation. Section~\ref{sec-conclusion} concludes the paper.

\section{Preliminaries}\label{sec-preliminary}

\subsection{Notations}
$\mathbb{S}^{n} = \{ A \in \mathbb{R}^{n \times n} \mid A_{ij} = A_{ji}, \forall i \neq j \}$ is the set of symmetric matrices. If $A\in\mathbb{S}^n$,  $\lambda_{\max}(A)$ and $\lambda_{\min}(A)$ denote the maximum and minimum eigenvalues of  $A$, respectively. For symmetric matrices $A$ and $B$, the notation $A\succeq B$ ($A\succ B$) denotes that $A-B$ is positive semidefinite (positive definite), and $A\preceq B$ ($A\prec B$) is defined analogously. The inertia of $A$ is the triple $(p,q,r)$, where $p$ (resp.$q$, $r$) is the number of positive (resp. negative, zero) eigenvalues of $A$, counted with multiplicities. For a general matrix $A$, its Hermitian symmetric part is denoted by $\bar{A}:= (A + A^H)/2$, where $A^H$ denotes the Hermitian transpose (conjugate transpose) of $A$. $\mathrm{diag}(A_1,\dots,A_k)$ denotes the block diagonal matrix with blocks $A_1,\dots,A_k$ on the diagonal. $L^{2}[0,T]$ denotes the Hilbert space of square–integrable functions on the interval $[0,T]$, with inner product 
$\langle f , g \rangle := \int_{0}^{T} f(t)^{\!*} g(t)\,dt$ and induced 
norm $\| f \| := \sqrt{\langle f , f\rangle}$. 

\subsection{Dissipative and Passive Systems}\label{sec-busmodel}


This subsection reviews the classical definitions of dissipativity and passivity. Consider a dynamical system represented by a state space model:
\begin{equation}\label{eq-dynamic}
    \begin{cases}
        \dot{x}=f(x,u)\\
        y=h(x,u)
    \end{cases}
\end{equation}
where state $x\in\mathcal{X}\subset\mathbb{R}^n$, input $u\in \mathcal{U}\subset\mathbb{R}^m$, and output $y\in\mathcal{Y}\subset\mathbb{R}^q$. The map $f:\mathcal{X}\times \mathcal{U}\rightarrow\mathbb{R}^n$ is locally Lipschitz, and $h:\mathcal{X}\times \mathcal{U}\rightarrow\mathcal{Y}$ is continuous. For each initial state $x_0 \in \mathcal{X}$, the system~\eqref{eq-dynamic} induces an operator $\mathbb{H}$ that maps any input signal $u(t)$ to the corresponding output signal $y(t)$. Throughout this paper, we consider MIMO systems with matched input-output dimensions, i.e., $m=q$ in~\eqref{eq-dynamic}. Without loss of generality, we assume that the origin is an equilibrium of~\eqref{eq-dynamic}. That is, $f(0,0)=0$, $h(0,0)=0$.


\begin{definition}[Dissipativity\cite{van2000l2}]
    The system~\eqref{eq-dynamic} is called dissipative with respect to a supply rate $s(u(t),y(t))$ if there exists a positive semidefinite (p.s.d.) function $V(x)$, referred to as the storage function, s.t. for all initial conditions $x(t_0)=x_0\in\mathcal{X}$ and for all admissible inputs $u$ and all $t_1\ge t_0$ the following inequality holds
    \begin{equation}
        V(x(t_1))\le V(x(t_0))+\int_{t_0}^{t_1}s(u(t),y(t))\, \mathrm{d} t
        \label{eq-dissipativity}
    \end{equation}
\end{definition}

\begin{definition}[Passivity\cite{khalil2002nonlinear}] 
    The system~\eqref{eq-dynamic} with $m=q$ is passive if the supply rate $s(t)$ in~\eqref{eq-dissipativity} is
    \begin{equation}
        s(t)=u^{\top}y
    \end{equation}
    If the storage function $V$ is smooth, then~\eqref{eq-dissipativity} implies
    \begin{equation}
        \dot{V}=\frac{\partial V}{\partial x}f(x,u)\le u^{\top}y, \forall (x,u)\in\mathcal{X}\times\mathcal{U}
    \end{equation}
    Moreover, the system is input-feedforward passive if $\dot{V} \le u^{\top}y - u^{\top}\delta(u)$ for some function $\delta$, 
    and input strictly passive if $\dot{V} \le u^{\top}y - u^{\top}\delta(u)$ with $u^{\top}\delta(u) > 0$, $\forall u \ne 0$.
    Similarly, it is output-feedback passive if $\dot{V} \le u^{\top}y - y^{\top}\rho(y)$ for some function $\rho$, 
    and output strictly passive if $\dot{V} \le u^{\top}y - y^{\top}\rho(y)$ with $y^{\top}\rho(y) > 0$, $\forall y \ne 0$.
    In addition, it is strictly passive if $\dot{V} \le u^{\top}y - \psi(x)$ for some positive definite function $\psi$.
    \label{de-dynamical passive}
\end{definition}

\begin{definition}[Zero-state observability~\cite{khalil2002nonlinear}]
The system~\eqref{eq-dynamic} is zero-state observable if
no solution of $\dot{x}=f(x,0)$ can stay identically in
$S=\{\,x\in\mathbb{R}^n \mid h(x,0)=0\,\}$ other than the trivial
solution $x(t)\equiv 0$.
\end{definition}

\subsection{Conventional scalar Passivity Indices}
In Definition~\ref{de-dynamical passive}, assume the dissipation terms take the linear forms $\delta(u)=\phi u$ and $\rho(y)=\xi y$ with $\phi, \xi\in\mathbb{R}$. The scalars $\phi$ and $\xi$ are called the input-feedforward and the output-feedback passivity indices, respectively. A positive index indicates passivity excess, whereas a negative index indicates a passivity shortage. The following definition consolidates these notions into a single scalar formulation.

\begin{definition}[scalar passivity indices\cite{bao2007process,zhu2014passivity}]
    The system~\eqref{eq-dynamic} is input-feedforward output-feedback passive (IF-OFP) if there exists a continuously differentiable p.s.d. storage function $V(x)$ s.t. 
    \begin{equation}
        \dot{V}\le u^{\top}y-\phi u^{\top}u-\xi y^{\top}y, \forall(x,u)\in\mathcal{X}\times\mathcal{U}
        \label{eq-passivity indices}
    \end{equation}
    for some $\phi,\xi\in\mathbb{R}$. Moreover, the system is called $(\phi,\xi)$--passive, where $\phi,\xi\in\mathbb{R}$ are the input-feedforward passivity index and the output-feedback passivity index, respectively.
    \label{de-scalar passivity indices}
\end{definition}

Before introducing matrix-valued passivity indices, we recall the classical notion of sector, which helps interpret passivity in terms of directional input–output relations.

\begin{definition}[Sector\cite{khalil2002nonlinear}]
    A memoryless function $h:[0,\infty)\times\mathbb{R}^m\rightarrow\mathbb{R}^m$ is said to belong to the sector:
    \begin{itemize}
        \item $[K_1,\infty]$ if $u^{\top}[h(u)-K_1 u]\ge 0$
        \item $[0,K_2]$ if $h^{\top}(u)[h(u)-K_2 u]\le 0$
        \item $[K_1,K_2]$ with $K=K_2-K_1=K^{\top}\succ 0$ if $[h(u)-K_1 u]^{\top}[h(u)-K_2 u]\le 0$
    \end{itemize}
    In all cases, the inequality should hold for all $(t,u)$.
\end{definition}

\section{From Passivity Indices to Passivity Matrices}\label{sec-matrixindex}
\subsection{Matrix-Valued Indices of Classical passivity}

The scalar passivity indices introduced earlier correspond to one-dimensional sector bounds. We now generalize this idea by replacing scalar dissipation terms with symmetric matrices, yielding the notion of matrix-valued passivity indices, or passivity matrices for short.

\begin{definition}[Matrix-valued passivity indices]
    The system~\eqref{eq-dynamic} is input-feedforward output-feedback passive (IF-OFP) if there exists a continuously differentiable p.s.d. storage function $V(x)$ s.t. 
    \begin{equation}
        \dot{V}\le u^{\top}y- u^{\top}\Phi u-y^{\top}\Xi y, \forall(x,u)\in\mathcal{X}\times\mathcal{U}
        \label{eq-passivity matrix}
    \end{equation}
    Moreover, the system is called $(\Phi,\Xi)$--passive, where the matrices $\Phi,\Xi\in\mathbb{S}^{m}$ are called the input-feedforward passivity matrix (IFPM) and the output-feedback passivity matrix (OFPM), respectively.
    \label{de-matrix passivity indices}
\end{definition}

 For a better understanding of the properties of matrix-valued passivity indices, we present the following observations.
\begin{itemize}
    \item[a)] 
    The matrix-valued passivity indices are natural generalizations of their scalar counterparts. The former reduces to the latter for SISO systems. For MIMO systems, the latter can serve as a measure of the former, as we explain later in this part. 
    \item[b)]
    If $\Phi = 0$ and $\Xi \ne 0$, the system is output-feedback passive with the OFPM $\Xi$, denoted by $OFP(\Xi)$. Furthermore, $\Xi\succ 0$ implies output strictly passive ($OSP(\Xi)$), indicating that the system possesses passivity excess in all output directions. Conversely, $\Xi \prec 0$ implies global shortage of passivity.
    \item [c)] If $\Xi = 0$ and $\Phi \ne 0$, the system is input-feedforward passive with the IFPM $\Phi$, denoted by $IFP(\Phi)$. Similarly, \(\Phi \succ 0\) yields input strict passivity (\(ISP(\Phi)\)), corresponding to passivity excess in all input directions, whereas \(\Phi \prec 0\) indicates a global passivity shortage.
\end{itemize}

When the passivity matrices $\Xi$ or $\Phi$ are neither positive definite nor negative definite, the system is neither globally strictly passive nor globally passive-deficient. Instead, it exhibits \emph{direction-dependent passivity}. Along eigen-directions associated with positive eigenvalues, the system exhibits passivity excess; along zero-eigenvalue directions, it behaves losslessly; and along directions corresponding to negative eigenvalues, it exhibits a passivity shortage. This mixed passivity profile is intrinsic to MIMO systems and cannot be captured by scalar passivity indices.

We next present a basic structural property of passivity matrices that is essential for their interpretation and computation. 


\begin{lemma}
    If the system~\eqref{eq-dynamic} is output--feedback passive with an output--feedback map $\rho(y)$, then the system possesses the output--feedback passivity matrix (OFPM) $\Xi$ provided that $\rho(y)$ belongs to the sector $[\Xi, \infty]$. 
    Similarly, if the system~\eqref{eq-dynamic} is input--feedforward passive with an input--feedforward map $\delta(u)$, then the system possesses the input--feedforward passivity matrix (IFPM) $\Phi$ provided that $\delta(u)$ belongs to the sector $[\Phi, \infty]$.
    \label{le-order}
\end{lemma}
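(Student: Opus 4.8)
The plan is to show that sector membership of the dissipation map upgrades the passivity inequality of Definition~\ref{de-dynamical passive} to the matrix-valued inequality~\eqref{eq-passivity matrix} of Definition~\ref{de-matrix passivity indices}, with essentially no additional work beyond a substitution. I would begin with the output-feedback branch: by hypothesis the system is output-feedback passive with map $\rho$, so there is a continuously differentiable p.s.d. storage function $V$ with $\dot V \le u^{\top}y - y^{\top}\rho(y)$ for all $(x,u)\in\mathcal{X}\times\mathcal{U}$.

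Next I would invoke the sector hypothesis $\rho\in[\Xi,\infty]$, which by the definition of sector means $y^{\top}\!\left[\rho(y)-\Xi y\right]\ge 0$ for all admissible $y$ (equivalently, along every trajectory and every $t$). Rearranging gives $y^{\top}\rho(y)\ge y^{\top}\Xi y$, and substituting this lower bound into the passivity inequality yields $\dot V \le u^{\top}y - y^{\top}\rho(y) \le u^{\top}y - y^{\top}\Xi y$. This is precisely~\eqref{eq-passivity matrix} with $\Phi=0$ and the given $\Xi$, so the same $V$ certifies that the system is $(0,\Xi)$--passive; hence it possesses the OFPM $\Xi$. The input-feedforward branch is the mirror argument: input-feedforward passivity supplies $\dot V \le u^{\top}y - u^{\top}\delta(u)$, the sector condition $\delta\in[\Phi,\infty]$ gives $u^{\top}\delta(u)\ge u^{\top}\Phi u$, and substitution produces $\dot V \le u^{\top}y - u^{\top}\Phi u$, i.e.~\eqref{eq-passivity matrix} with $\Xi=0$.

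Since the argument reduces to a single algebraic substitution, there is no genuine obstacle; the only points that warrant care are bookkeeping ones. First, the sector sign convention must be read correctly — "$\rho$ in sector $[\Xi,\infty]$" is $y^{\top}[\rho(y)-\Xi y]\ge 0$, not the reverse — so that it indeed furnishes a \emph{lower} bound on $y^{\top}\rho(y)$. Second, because $y$ is evaluated along solutions of~\eqref{eq-dynamic}, the pointwise-in-$t$ sector inequality is exactly what is needed to dominate the pointwise-in-$(x,u)$ dissipation rate $\dot V$. Finally, since $y^{\top}\Xi y = y^{\top}\bar{\Xi}\,y$ one may, without loss of generality, take $\Xi$ to be symmetric, consistent with the requirement $\Xi\in\mathbb{S}^m$ in Definition~\ref{de-matrix passivity indices}. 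It is also worth remarking, as an immediate corollary of the same reasoning, that the property is monotone in the Loewner order: if $\Xi'\preceq\Xi$ and the system has OFPM $\Xi$, then $\rho\in[\Xi,\infty]$ implies $\rho\in[\Xi',\infty]$, so the system also has OFPM $\Xi'$ — a fact that motivates the partial-order analysis in Section~\ref{sec-physical interpretation}.
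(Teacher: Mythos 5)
Your proof is correct and follows essentially the same route as the paper's: both start from the passivity inequality $\dot V \le u^{\top}y - y^{\top}\rho(y)$, use the sector condition $y^{\top}[\rho(y)-\Xi y]\ge 0$ to lower-bound $y^{\top}\rho(y)$, and substitute, with the IFP case handled symmetrically. The additional remarks on symmetrization and Loewner monotonicity are sound but not part of the paper's argument.
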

\begin{proof}
    The proof can be found in Appendix~\ref{sec-app-orderstructure}.
\end{proof}

\begin{remark}
    If the system possesses an OFPM $\Xi$, then any matrix $M \preceq \Xi$ is also an OFPM of the system. 
    Likewise, if the system possesses an IFPM $\Phi$, then any matrix $N \preceq \Phi$ is also an IFPM of the system. 
    In general, the passivity matrix that is maximal under the Loewner partial order provides the least conservative and most informative quantification of the system's passivity.
    \label{re-maxorder}
\end{remark}
By choosing $\xi = \lambda_{\min}(\Xi)$ and $\phi = \lambda_{\min}(\Phi)$, 
the matrix-valued passivity indices reduce to their scalar counterparts. 
Thus, the matrix formulation does not alter the existence of passivity indices; instead, it refines their descriptive capability. 
Specifically, the passivity matrices capture the passivity characteristics of different input-output channels in a MIMO system, whereas the scalar indices reflect only the weakest passivity direction. 
Such a difference in granularity is also reflected in stability analysis: 
Since the passivity matrices encode multidirectional passivity characteristics, the stability conditions derived from them are typically less conservative than those based on scalar passivity indices. 
This distinction will be further illustrated in the subsequent section.

Definition~\ref{de-matrix passivity indices} is a structured instance of QSR dissipativity. Indeed, the dissipation inequality~\eqref{eq-passivity matrix} is equivalent to dissipativity with the quadratic supply rate
\begin{equation*}
    s(u,y)=\begin{bmatrix}
        u\\y
    \end{bmatrix}^\top\begin{bmatrix}
        -\Phi&\frac{1}{2}I\\\frac{1}{2}I&-\Xi
    \end{bmatrix}\begin{bmatrix}
        u\\y
    \end{bmatrix}
\end{equation*}
Our focus is on using this structure to obtain an interpretable, direction-dependent measure of passivity shortage or excess for MIMO systems, and to leverage it in interconnection and passivation results.

\subsection{Passivity Matrices for Linear Time-Invariant Systems}
We now extend the matrix-valued passivity indices to linear time-invariant (LTI) systems.

\begin{definition}[Matrix-valued IFP index]
    For a stable linear system $G(s)$, the frequency-dependent input-feedforward passivity matrix (IFPM) at frequency $\omega$ is defined as
    \begin{equation}
        H(\omega) \triangleq \frac{1}{2}\left(G(j\omega)+G^H(j\omega)\right)
        \label{eq-Hermitian transfer}
    \end{equation}
    A Hermitian matrix $\Phi$ is called an IFPM of $G(s)$ if $ \Phi \preceq H(\omega),\forall\omega\in\mathbb{R}$.
    \label{def-linear-IFPM}
\end{definition}

\begin{definition}[Matrix-valued OFP index]
    For a minimum phase linear system $G(s)$, the frequency-dependent output-feedback passivity matrix (OFPM) at frequency $\omega$ is given by
    \begin{equation}
        K(\omega) \triangleq \frac{1}{2}\left(G^{-1}(j\omega)+\left[G^{H}(j\omega)\right]^{-1}\right)
    \end{equation}
    A Hermitian matrix $\Xi$ is said to be the OFPM of $G(s)$ if $\Xi \preceq K(\omega),\forall\omega\in\mathbb{R}$.
    \label{def-linear-OFPM}
\end{definition}

Note that Definitions~\ref{def-linear-IFPM} and~\ref{def-linear-OFPM} are equivalent to Definition~\ref{de-matrix passivity indices} by invoking the KYP lemma~\cite{khalil2002nonlinear}. These definitions bridge the real positivity in the frequency domain and the passivity matrices in the state-space, which will serve as the basis for structural analysis in the next section.

\begin{remark}[Relation to frequency-weighted passivity indices]
A frequency-domain extension of passivity indices introduces a weighting matrix $R(\omega)$ and defines the margin
\begin{equation}
\phi=\tfrac12\lambda_{\min}\!\left(R^H(\omega)G(j\omega)+G^H(j\omega)R(\omega)\right)
\label{eq-frequencyindices}
\end{equation}
thereby incorporating frequency-weighted MIMO specifications~\cite{chen2024extended}.
In contrast, this paper extends indices from scalars to matrices to preserve directional structure.
These two extensions can be combined by defining the weighted IFPM
\begin{equation*}
H_R(\omega):=\tfrac12\!\left(R^H(\omega)G(j\omega)+G^H(j\omega)R(\omega)\right)
\end{equation*}
and applying our matrix-valued framework to $H_R(\omega)$; the margin in~\eqref{eq-frequencyindices} is then a scalar reduction via $\lambda_{\min}\!\left(H_R(\omega)\right)$.
\end{remark}


\section{Geometric Interpretation of Passivity Matrix}\label{sec-physical interpretation}
Matrix-valued passivity indices extend scalar measures by revealing how passivity is distributed across different input–output directions. This directional information offers a more refined view of system passivity and is particularly useful for stability analysis or controller design.

This section develops an intrinsic interpretation of these matrices within the dissipativity theory, showing how their directional structure arises naturally from the curvature of the dissipation functional. We then examine the structural properties of passivity matrices for LTI systems under the Loewner partial order and present two practical principles for computing representative matrices. 
\subsection{Intrinsic Geometric Structure}
Consider the nonlinear control system~\eqref{eq-dynamic}.
The supply rate is taken in the quadratic form of
\begin{equation}
    s(t) = 
\begin{bmatrix} u \\[1mm] y \end{bmatrix}^{\top}
\begin{bmatrix}
    Q_{uu} & Q_{uy} \\ Q_{uy}^\top & Q_{yy}
\end{bmatrix}
\begin{bmatrix} u \\[1mm] y \end{bmatrix}
\label{eq-dissipation rate}
\end{equation}
where the block matrix is symmetric and characterizes the dissipativity structure.
For a continuously differentiable nonnegative storage function $V:\mathcal{X}\to\mathbb{R}_{\ge 0}$, the instantaneous dissipation is defined as 
\begin{equation}
    d(t) := s(t) - \dot{V}=s(t)-\nabla V(x)^{\top} f(x,u).
    \label{eq-instantaneous dissipation}
\end{equation}
Equation~\eqref{eq-instantaneous dissipation} reflects the instantaneous energy balance at  $t$: the supplied power minus the rate of change of stored energy equals the instantaneous rate at which the system dissipates energy. However, instantaneous dissipation does not describe how the system dissipates energy along an entire input–state–output trajectory. To capture the cumulative effect over a finite horizon and obtain a scale-independent measure, we consider the time-averaged dissipation functional:
\begin{equation}
\begin{aligned}
    J_T[u] :=& \frac{1}{T}\int_0^T d(t)\, \mathrm{d} t\\
\end{aligned}
\end{equation}
This construction provides a meaningful measure of the system’s average dissipative behavior over the interval $[0,T]$. 

To understand how the system dissipates energy along different input modes, it is necessary to examine how $J_T$ changes under infinitesimal perturbations of the input. The dissipativity directions emerge naturally from the second variation of the dissipation functional. Mathematically, whenever a functional is twice Fréchet differentiable on a Hilbert space, its second variation induces a continuous symmetric bilinear form, which corresponds to a bounded self-adjoint operator~\cite{reed1980methods}.
The spectral theorem for bounded self-adjoint operators provides a spectral decomposition of this operator, yielding orthogonal spectral components that describe how the functional bends along different perturbation directions.

To evaluate the second variation of $J_T$, we consider perturbations around a closed-loop trajectory $(x^\ast(t),u^\ast(t),y^\ast(t))$. Linearizing the system dynamics yields
\[
\delta\dot{x} = A(t)\delta x + B(t)\delta u, 
\qquad
\delta y = C(t)\delta x + D(t)\delta u
\]
where the matrices $(A,B,C,D)$ are the Jacobians of $(f,h)$ along the trajectory. Assume these matrices are measurable and essentially bounded on $[0,T]$.  
Solving the variational dynamics with zero initial deviation yields:
\begin{equation}
    \begin{cases}
        \delta x(t) = \int_0^T \Psi(t,\tau) B(\tau)\,\delta u(\tau)\, \mathrm{d} \tau, \qquad\\
        \delta y(t) = \int_0^T C(t)\Psi(t,\tau)B(\tau)\,\delta u(\tau)\, \mathrm{d} \tau + D(t)\delta u(t)
    \end{cases}
    \label{eq-disturbed system}
\end{equation}
where $\Psi(t,\tau)$ is the state transition matrix.

Since the storage function is not unique, we consider periodic trajectories $x(0)=x(T)$ and periodic perturbations $\delta x(0)=\delta x(T)$ so that the storage-related boundary terms cancel. This isolates the intrinsic input–output dissipation. When the second variation is computed under this boundary condition, substituting~\eqref{eq-disturbed system} into the quadratic supply rate leads to the second
variation of the dissipation functional:
\begin{equation*}
    \begin{aligned}
        \delta^2 J_T[u] =& \frac{1}{T}\int_0^T[\delta^2s(t)]\, \mathrm{d} t\\
        =&\frac{1}{T}\iint_{0}^{T}\!\delta u(t)^{\top} K_Q(t,\tau) \delta u(\tau)\, \mathrm{d} \tau\, \mathrm{d} t,
    \end{aligned}
\end{equation*}
where the kernel $K_Q(t,\tau)$ is given by
\begin{equation}
\begin{aligned}
    K_Q(t,\tau) =& Q_{uu}\delta(t-\tau) + Q_{uy} G(t,\tau)^{\top} 
    + G(t,\tau) Q_{uy} \\
    &+ \int_0^T G(s,t)^{\top} Q_{yy} G(s,\tau)\, \mathrm{d} s
    \label{eq-kernel}
\end{aligned}
\end{equation}
and 
\[
G(t,\tau) := C(t)\Psi(t,\tau)B(\tau) + D(t)\delta(t-\tau)
\]

This defines a bounded self-adjoint operator
\begin{equation}
    \bigl(\mathbb{D}^{(Q)}_{T}\delta u\bigr)(t)
    := \int_{0}^{T} K_Q(t,\tau)\,\delta u(\tau)\, \mathrm{d} \tau
    \label{eq-DToperator}
\end{equation}


By the spectral theorem for bounded self-adjoint operators, there exists a projection-valued spectral measure $E(\cdot)$ s.t.
\begin{equation}
    \mathbb{D}^{(Q)}_{T} = \int_{\sigma(D_T^{(Q)})} \lambda\, \mathrm{d}E(\lambda)
\end{equation}
Here $\sigma(\cdot)$ denotes the spectrum of a bounded self-adjoint operator. Since $\mathbb{D}^{(Q)}_{T}$ is self-adjoint, its spectrum is real, i.e., $\sigma(\mathbb{D}^{(Q)}_{T})\subset\mathbb{R}$.
For any input perturbation $\delta u\in L_2([0,T],\mathbb{R}^m)$, the second variation admits the spectral representation
\begin{equation}
    \delta^2 J_T[\delta u]
    =
    \frac{1}{T}\big\langle \mathbb{D}^{(Q)}_{T}\delta u,\delta u\big\rangle
    =
    \frac{1}{T}\int_{\sigma(\mathbb{D}^{(Q)}_{T})} \lambda\, \mathrm{d}\big\langle E(\lambda)\delta u,\delta u\big\rangle
    \label{eq:spectral_repr_second_variation}
\end{equation}
Here $E(\lambda)\coloneqq E((-\infty,\lambda])$ denotes the right-continuous spectral family associated with $E(\cdot)$, and the integral is understood in the Stieltjes sense.
Therefore, the spectral decomposition of $\mathbb D_T^{(Q)}$ reveals the curvature structure of $J_T$. The spectral projections $E(\cdot)$ provide a spectral resolution of $\mathbb{D}^{(Q)}_{T}$, yielding mutually orthogonal spectral components of any perturbation $\delta u$.
We interpret these spectral subspaces as \emph{\textbf{generalized dissipativity directions}}: a perturbation $\delta u$ contributes to $\delta^2 J_T$ through its orthogonal components on these subspaces, and each component is weighted by the corresponding spectral value $\lambda$. The quantity  $\left\langle \mathbb D_T^{(Q)}\delta u,\delta u \right\rangle$ aggregates the dissipativity curvature across the spectral components of $\delta u$, while the normalized Rayleigh quotient $\left\langle \mathbb D_T^{(Q)}\delta u,\delta u \right\rangle/\left\|\delta u\right\|_2^2$ quantifies the associated average dissipativity intensity of $\delta u$. Positive contributions correspond to increasing energy absorption and negative contributions correspond to energy release. If $\mathbb D_T^{(Q)}$ is compact, the above spectral decomposition reduces to a discrete eigenfunction expansion $\mathbb{D}^{(Q)}_{T}\phi_k = \lambda_k \phi_k$ with an orthonormal eigenbasis $\{\phi_k\}$, choosing $\delta u=\phi_k$ recovers the discrete modal interpretation with $\delta^2 J_T[\phi_k]=\lambda_k/T$.

\begin{remark}
The concept of dissipativity directions does not rely on the supply rate being globally quadratic. 
If the supply rate is twice Fréchet differentiable in $(u,y)$ along a given trajectory and the induced second variation defines a continuous symmetric bilinear form on $L^2([0, T],\mathbb{R}^m)$, then it admits a Riesz representation by a bounded self-adjoint operator. The above spectral analysis of dissipativity directions still applies. Compactness and hence a discrete eigenfunction expansion requires additional regularity assumptions.  
\label{re-geoqsrsupplyrate}
\end{remark}

In this work, we restrict attention to quadratic supply rates in order to obtain a closed-form representation of the corresponding operator in terms of constant matrices $Q_{uu}$, $Q_{uy}$, and $Q_{yy}$.
Passivity directions arise as a special case of this general construction.  
For the standard passivity supply rate $s(u,y)=u^\top y$, one has $Q_{uy}=\tfrac{1}{2}I$ and $Q_{uu}=Q_{yy}=0$, and the operator~\eqref{eq-DToperator} reduces to
\begin{equation}
    \bigl(\mathbb{D}^{(Q)}_{T}\delta u\bigr)(t)
    = \frac{1}{2}\int_{0}^{T} \!\bigl[G(t,\tau)+G(t,\tau)^{\!\top}\bigr]\delta u(\tau)\, \mathrm{d} \tau
\end{equation}
whose spectral subspaces (and eigenfunctions in the compact case) therefore capture the characteristic passivity directions of the system.

For the passivity of linear time–invariant systems, the kernel of the
dissipativity operator $\mathbb{D}_{T}^{(Q)}$ reduces to
\begin{equation}
    K_Q(t,\tau)=\tfrac12\big(g(t-\tau)+g(\tau-t)^\top\big)
\end{equation}
with $g(\cdot)$ denoting the impulse response of the system. The passivity
directions and intensities are therefore determined by the eigenfunction–eigenvalue problem
\begin{equation}
    \int_{0}^{T}K_Q(t,\tau)\phi_k(\tau)\, \mathrm{d} \tau
= \lambda_k\,\phi_k(t)
\end{equation}
Since the kernel $K_Q(t,\tau)$ of an LTI system depends only on the time
difference $t-\tau$, the dissipativity operator $\mathbb{D}_{T}^{(Q)}$ is
a convolution–type self–adjoint operator. Such operators commute with
time–shift operators and, when considered on an infinite time horizon
(or in the asymptotic limit $T\to\infty$), are diagonalized by the
Fourier basis. As a result, the eigenfunction–eigenvalue structure
asymptotically takes the form
\begin{align}
u_{i,\omega}(t)&=e^{j\omega t}v_i(\omega)\\
\mathbb{D}_{\infty}^{(Q)}u_{i,\omega}(t)&=\lambda_i(\omega)\,u_{i,\omega}(t)
\end{align}
where $v_i(\omega)$ is an eigenvector of the Hermitian matrix $H(\omega)$
defined as~\eqref{eq-Hermitian transfer} and $\lambda_i(\omega)$ is the
corresponding eigenvalue. Thus, in the frequency domain, the passivity
operator $\mathbb{D}_{\infty}^{(Q)}$ reduces to multiplication by the
Hermitian part of the transfer function, which corresponds exactly to
the frequency–dependent IFPM, while $\mathbb{D}_{T}^{(Q)}$ on finite horizons provides a finite–time approximation of this spectral
characterization. The eigenvalues $\lambda_i(\omega)$ quantify the
passivity intensity at each frequency, while the associated vectors
$v_i(\omega)$ represent the passivity directions in the input space. If
$G(s)$ is a minimum phase system, then $G^{-1}(s)$ exists and is stable,
thus the OFPM of $G(s)$ corresponds to the IFPM of the inverse system
$G^{-1}(s)$. Consequently, the dissipativity geometry of LTI systems is trajectory–independent, and the passivity matrix fully characterizes both
the magnitude and the direction of energy dissipation at frequency $\omega$.

\begin{remark}
For nonlinear systems, the operator $\mathbb{D}_{T}^{(Q)}$ varies with the chosen trajectory $(x^\ast(t),u^\ast(t),y^\ast(t))$. Consequently, dissipativity (and passivity) directions are 
inherently trajectory–dependent, and there generally does not exist a single global direction that is valid for all operating points. 
A global passivity matrix must satisfy the 
dissipativity inequality~\eqref{eq-dissipativity} for all $(x,u)$, and thus corresponds to the lower envelope of all local dissipativity directions. 
\end{remark}

A matrix--valued passivity index contains information about both the passivity directions and the passivity intensities, whereas a scalar index reflects only the weakest passivity intensity.  The meaning of a matrix--valued passivity index can also be understood from the perspective of input--output coordinate transformations.
Consider a system~\eqref{eq-dynamic} with an IFPM $\Phi$. The symmetric IFPM can be orthogonally decomposed as $\Phi=Q_{in}^{\top}R_{in}Q_{in}$, where $Q_{in}$ is an orthogonal matrix and $R_{in}$ is a diagonal matrix whose diagonal entries are the eigenvalues of $\Phi$. According to the definition of IFP, the passivity inequality could be rewritten as:
\begin{equation}
\begin{aligned}
    \dot{V}&\le u^{\top}y-u^{\top}\Phi u\\
    &=(Q_{in}u)^{\top}(Q_{in}y)-(Q_{in}u)^{\top}R_{in}(Q_{in}u)
\end{aligned}
\end{equation}

This expression shows that the orthogonal decomposition corresponds to a coordinate transformation of the input and output signals,
namely $u' = Q_{in}u$ and $y' = Q_{in}y$, under which the IFPM $R_{in}$ becomes fully
diagonal. Each diagonal entry of $R_{in}$ represents the passivity intensity associated with a decoupled input--output subchannel, while the columns of $Q$ serve as the eigenvectors of $\Phi$ and characterize the directions of coordinate transformation in which the passivity property becomes completely decoupled.

The OFPM $\Xi$ admits the same orthogonal decomposition $\Xi=Q_{out}^{\top}R_{out}Q_{out}$ and associated input–output coordinate transformation:
\begin{equation}
    \dot{V}\le (Q_{out}u)^{\top}(Q_{out}y)-(Q_{out}y)^{\top}R_{out}(Q_{out}y)
\end{equation}
The coordinate transformation $u' = Q_{out}u$ and $y' = Q_{out}y$ leading to a fully decoupled representation of the output–feedback passivity channels.

The above decomposition also clarifies what is—and what is not—captured by constant matrix representatives under quadratic supply rates; we briefly comment on possible extensions beyond matrix-valued indices.

\begin{remark}
    More generally, the passivity matrices $\Phi$ and $\Xi$ in~\eqref{eq-passivity matrix} can be interpreted as second-order tensors that characterize the quadratic component of energy dissipation. This interpretation reveals a natural pathway for extending the proposed framework to strictly non-quadratic systems. While this paper focuses on quadratic supply rates, complex nonlinear damping phenomena may require higher-order tensor indices to accurately capture the passivity intensity. Thus, the matrix-valued formulation serves as the foundational second-order instance of a broader tensor-valued passivity theory for polynomial dynamical systems.
\end{remark}



\subsection{Order-Theoretic Properties of Passivity Matrices in LTI Systems}
Consider the linear time-invariant system\label{sec-orderpropwety}
\begin{equation}
    \begin{cases}
        \dot{x} = Ax + Bu\\
        y = Cx + Du
    \end{cases}
    \label{eq-linear system}
\end{equation}
with transfer function $G(s) = C(sI-A)^{-1}B + D$ and a quadratic storage function $V(x) = x^{\top}Px$.  
The IFP and OFP matrices of~\eqref{eq-linear system} can be computed from the following LMI:
\begin{equation}
    W\overset{\mathrm def}{=}
    \left[
    \begin{array}{c: c}
        \makecell{PA+A^{\top}P \\ + 2C^{\top}\Xi C}  
        & \makecell{PB - C^{\top} \\+ 2C^{\top}\Xi D} \\
        \noalign{\vspace{5pt}}
        \hdashline
        \noalign{\vspace{5pt}}
        \makecell{B^{\top}P-C \\+ 2D^{\top}\Xi C}  
        & \makecell{-(D+D^{\top}) \\+ 2\Phi+2D^{\top}\Xi D}
    \end{array}
    \right] \prec 0
    \label{eq-LMI}
\end{equation}
Setting $\Xi = 0$ yields the IFP case, while setting $\Phi = 0$ yields the OFP case. 

The passivity matrices definitions in Definition~\ref{def-linear-IFPM} and~\ref{def-linear-OFPM} can be interpreted using the Löwner partial order on symmetric matrices. 
For the IFP case, the matrices 
\[
\mathcal{H} \triangleq \{\, H(\omega) : \omega\in\mathbb{R} \,\} \subset \mathbb{S}_m
\]
form a frequency-indexed family of symmetric matrices.  
For the OFP case, the corresponding family is  
\[
\mathcal{K} \triangleq \{\, K(\omega) : \omega\in\mathbb{R} \,\} \subset \mathbb{S}_m
\]

\begin{definition}[Lower bound]
Let $\mathcal{A}\subset \mathbb{S}^m$ be a nonempty set of symmetric matrices.
A matrix $C\in \mathbb{S}^m$ is called a lower bound of $\mathcal{A}$
if
\[
C \preceq A, \qquad \forall\, A\in\mathcal{A}.
\]
The set of all lower bounds of $\mathcal{A}$ is denoted by
\[
\mathcal{L}(\mathcal{A}) 
\;=\; \{\, X\in \mathbb{S}^m : X \preceq A,\ \forall A\in\mathcal{A} \,\}
\]
\end{definition}

A symmetric matrix $\Phi$ is an IFPM of $G(s)$ if and only if it is a lower bound of $\mathcal{H}$. Similarly, the OFP index~$\Xi$ is a lower bound of the matrix family $\mathcal{K}$. The LMI in~\eqref{eq-LMI} provides exactly such a lower bound, in the sense that any solution $(P,\Phi,\Xi)$ to the LMI yields a matrix $\Phi\! \in \mathcal{L}(\mathcal{H})$ (or $\Xi \in \mathcal{L}(\mathcal{K})$). Motivated by Remark~\ref{re-maxorder}, we seek lower bounds that are maximal in the Löwner order, as such choices provide the least conservative representation of the system’s passivity.  This leads naturally to the notion of a maximal lower bound.

\begin{definition}[Maximal lower bound]
Let $\mathcal{A}\subset \mathbb{S}^m$ be a nonempty set of symmetric matrices.  
A matrix $C\in\mathcal{L}(\mathcal{A})$ is called a \emph{maximal lower bound} 
of $\mathcal{A}$ if it is a maximal element of the partially ordered set 
$(\mathcal{L}(\mathcal{A}),\preceq)$; that is, 
whenever $D\in\mathcal{L}(\mathcal{A})$ satisfies $C\preceq D$, then $D=C$. The set of all maximal lower bounds of $\mathcal{A}$ is denoted by $\mathcal{L}_{\max}(\mathcal{A})$.
\end{definition}

According to this definition, all $\Phi \in \mathcal{L}(\mathcal{H})$ and 
$\Xi \in \mathcal{L}(\mathcal{K})$ are valid candidates for passivity matrices 
with maximal order.

\begin{lemma}[Maximal lower bounds of two symmetric matrices~\cite{stott2016maximallowerboundslowner}]
    Let $A,B\in\mathbb{S}^n$, and let $(p,q,r)$ denote the inertia of $A-B$. Then the set of maximal lower bounds of $\left\{A,B\right\}$ is nonempty and there exists a bijection $\psi:\mathbb{M}_{p,q}\rightarrow\mathcal{L}_{\max}(\left\{A,B\right\})$.
    \label{le-maximallowerbounds}
\end{lemma}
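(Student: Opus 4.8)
The plan is to reduce the problem to a known structural result about simultaneously diagonalizable pairs and then lift it to arbitrary symmetric pairs via a congruence. First I would invoke simultaneous diagonalization: since $A-B\in\mathbb{S}^n$, there is an invertible $S$ (one may even take it from a suitable generalized eigendecomposition, but invertibility suffices) with $S^\top(A-B)S=\mathrm{diag}(I_p,-I_q,0_r)$. The key observation is that $X$ is a lower bound of $\{A,B\}$ if and only if $S^\top X S$ is a lower bound of $\{S^\top A S,\,S^\top B S\}$, because congruence by an invertible matrix preserves the Loewner order ($M\succeq 0\iff S^\top M S\succeq 0$). Moreover congruence is an order isomorphism of $(\mathbb{S}^n,\preceq)$ onto itself, hence it carries $\mathcal{L}(\{A,B\})$ bijectively onto $\mathcal{L}(\{S^\top A S,S^\top B S\})$ and, since maximality is a purely order-theoretic notion, it restricts to a bijection between the corresponding sets of maximal lower bounds. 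So without loss of generality we may assume $A-B=\mathrm{diag}(I_p,-I_q,0_r)$.

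Next I would compute $\mathcal{L}_{\max}$ explicitly in this normal form. Write $C\preceq A$ and $C\preceq B$; equivalently $A-C\succeq 0$ and $A-C\succeq A-B=\mathrm{diag}(I_p,-I_q,0_r)$, i.e. $A-C\succeq 0$ and $(A-C)-\mathrm{diag}(I_p,-I_q,0_r)\succeq 0$. Set $Y:=A-C\succeq 0$; the lower-bound conditions say $Y\succeq 0$ and $Y\succeq \mathrm{diag}(I_p,-I_q,0_r)$, and $C$ is a \emph{maximal} lower bound iff $Y$ is \emph{minimal} among such $Y$ in the Loewner order (maximal $C$ $\Leftrightarrow$ minimal $Y=A-C$). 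Thus the task becomes: describe the minimal elements of $\{Y\in\mathbb{S}^n: Y\succeq 0,\ Y\succeq D\}$ where $D=\mathrm{diag}(I_p,-I_q,0_r)$. This is exactly the structure of a Loewner supremum of $\{0,D\}$ that need not be attained uniquely; the minimal upper bounds are parametrized by the $p\times q$ blocks. Concretely, writing $Y$ in block form indexed by the $(p,q,r)$ partition, minimality forces the $r$-block rows/columns of $Y$ to vanish (any nonzero part there could be shrunk while preserving $Y\succeq 0$ and $Y\succeq D$, since $D$ is zero on that block), and on the $(p+q)$ part one is left with the classical fact that the minimal $Y$ with $Y\succeq 0$ and $Y\succeq \mathrm{diag}(I_p,-I_q)$ are exactly $Y=\begin{bmatrix}I_p & Z\\ Z^\top & Z^\top Z\end{bmatrix}$ for $Z\in\mathbb{M}_{p,q}$ — these have rank $p$, dominate both $0$ and $\mathrm{diag}(I_p,-I_q)$, and are pairwise incomparable and minimal. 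Mapping back, $\psi(Z):=A-\mathrm{diag}(Y(Z),0_r)$ (then conjugated by $S^{-\top}$) gives the claimed bijection $\psi:\mathbb{M}_{p,q}\to\mathcal{L}_{\max}(\{A,B\})$, and nonemptiness is immediate since $\mathbb{M}_{p,q}\neq\varnothing$ for any $p,q\ge 0$ (including the degenerate cases $p=0$ or $q=0$, where $\mathbb{M}_{p,q}$ is a single point and the maximal lower bound $A\wedge B$ is unique).

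The main obstacle I anticipate is the careful verification of the two nontrivial inclusions in the middle step: (i) that every $Y$ of the form $\begin{bmatrix}I_p & Z\\ Z^\top & Z^\top Z\end{bmatrix}$ (padded with a zero $r$-block) is genuinely \emph{minimal} — i.e. no strictly smaller $Y'\preceq Y$ still dominates both $0$ and $D$ — and (ii) that \emph{every} minimal $Y$ has this form, with no minimal elements of intermediate rank. For (ii) the argument is to start from an arbitrary lower bound $Y$, show one can always decrease it until its restriction to the ``negative'' and ``zero'' coordinates is as small as the constraints allow, which pins the diagonal $q$-block to $Z^\top Z$ and the $p$-block to $I_p$ and kills the $r$-block; this is a Schur-complement computation ($Y\succeq 0$ with $p$-block $I_p$ forces the form via the Schur complement $Z^\top Z$, and $Y\succeq D$ is then automatic). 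I would also double-check the boundary cases $r=0$ (so $S$ can be taken orthogonal times a scaling) and $p=0$ or $q=0$ separately to confirm $\mathbb{M}_{p,q}$ is the correct (possibly singleton) index set, which is where the statement's phrase ``nonempty'' is doing real work. Everything else — the order-isomorphism property of congruence and the maximal/minimal duality under $C\mapsto A-C$ — is routine and can be stated without detailed calculation, citing~\cite{stott2016maximallowerboundslowner} for the block-parametrization lemma if a self-contained derivation is not desired.
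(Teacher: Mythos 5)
The paper offers no proof of this lemma at all---it is quoted directly from the cited reference---so there is no in-paper argument to compare against; your attempt has to stand on its own. Your overall skeleton is sound and matches the standard route: congruence of $A-B$ to $\mathrm{diag}(I_p,-I_q,0_r)$, the fact that congruence by an invertible matrix is an order isomorphism of $(\mathbb{S}^n,\preceq)$ (hence preserves lower bounds and maximality), and the duality $C\mapsto Y=A-C$ converting maximal lower bounds of $\{A,B\}$ into minimal upper bounds of $\{0,D\}$.

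The genuine gap is the explicit parametrization you place at the heart of the argument, and the step you call ``automatic'' is exactly where it breaks. Take $p=q=1$, $r=0$, $Z=z\neq 0$: your candidate is $Y=\left[\begin{smallmatrix}1 & z\\ z & z^2\end{smallmatrix}\right]$, and $Y-D=\left[\begin{smallmatrix}0 & z\\ z & z^2+1\end{smallmatrix}\right]$ has determinant $-z^2<0$, so $Y\not\succeq D$. More generally, a positive semidefinite matrix with a vanishing diagonal block must have vanishing corresponding off-diagonal blocks, so $\left[\begin{smallmatrix}I_p & Z\\ Z^\top & Z^\top Z\end{smallmatrix}\right]-\mathrm{diag}(I_p,-I_q)\succeq 0$ forces $Z=0$; your family collapses to a single matrix and cannot be in bijection with $\mathbb{M}_{p,q}$. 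The correct minimal elements are different: in the $2\times 2$ case, writing $Y=\left[\begin{smallmatrix}a & b\\ b & c\end{smallmatrix}\right]$ and making both constraints $ac\ge b^2$ and $(a-1)(c+1)\ge b^2$ tight yields $c=a-1$ and $b^2=ac$, i.e.\ $Y=\left[\begin{smallmatrix}1+z^2 & z\sqrt{1+z^2}\\ z\sqrt{1+z^2} & z^2\end{smallmatrix}\right]$, which is not of your claimed form for $z\neq 0$. Since exhibiting the correct family and proving both inclusions (every such $Y$ is minimal; every minimal $Y$ is of that form) is the substantive content of the lemma, the proof as written does not go through; the reduction steps surrounding it are fine but carry none of the real weight.
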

This result characterizes the full family of maximal lower bounds of two 
symmetric matrices as a $pq$-dimensional manifold. 
Lemma~\ref{le-maximallowerbounds} and other fundamental results in matrix theory show that, in general, a set of symmetric matrices does not admit a unique greatest lower bound under 
the Löwner order unless all matrices in $\mathcal{A}$ are mutually comparable. 
For stable LTI systems, however, the set $\mathcal{L}(\mathcal{A})$ typically 
contains multiple maximal lower bounds, none of which dominate the others.

In passivity analysis, relying solely on maximality in the Löwner order is 
therefore not sufficient. Such a rule does not align with the LMI-based 
computation of passivity matrices, nor does it offer structural properties that 
help explain the underlying passivity behavior of the system. 
Additional selection principles are needed, built on top of the partial order, 
to extract passivity matrices that better reflect the geometry.

To refine the LMI-based characterization in~\eqref{eq-LMI}, we introduce two 
practically meaningful selection principles, each highlighting a different 
structural aspect of passivity matrices.

\subsubsection{\textbf{Maximize the trace}}
Since a matrix-valued passivity index captures both passivity directions and intensities, choosing the trace-maximal passivity matrix is natural. A larger trace represents stronger overall passivity when all directions are considered together.

This selection principle is also consistent with the geometric construction 
of~\cite{BURGETH2007277}, where the infimum of a finite family of positive semidefinite matrices is obtained as the matrix whose 
associated “matrix ball’’ has the smallest enclosing radius—equivalently, 
the lower bound with the largest trace. Based on this alignment, we approximate the same principle in the LTI passivity problem by maximizing the trace subject to the LMI in~\eqref{eq-LMI}. The resulting trace-maximal IFPM is denoted by $\Phi^{\mathrm{tr}}$, and the corresponding OFPM by $\Xi^{\mathrm{tr}}$.

A trade-off still exists. A trace-maximal matrix does not maximize its minimum eigenvalue, so the weakest passivity direction may become smaller. However, this reduction is structurally constrained: since the passivity matrix must remain a lower bound of the entire set $\mathcal{H}$, any degradation in the weakest passivity direction is necessarily bounded. In practice, this constraint yields a balanced outcome: the overall passivity level is increased as much as possible, while the sacrifice in the weakest direction remains within the limits imposed by the Loewner ordering. Thus, if we evaluate passivity over the entire input--output space, the trace-maximal choice provides a reasonable and comprehensive selection rule.

\subsubsection {\textbf{Maximize the minimum eigenvalue}}
Before introducing the second principle, we recall the notion of a tangency 
constraint, which offers a useful structural perspective on the lower bounds of 
symmetric matrices~\cite{stott2016maximallowerboundslowner}. 
For subspaces $U,V\subset\mathbb{R}^n$, the tangency constraint associated 
with $A,B\in\mathbb{S}^n$ is the affine subspace
\begin{equation*}
\mathcal{T}_{A,B}(U,V)
\triangleq
\left\{
C\in\mathbb{S}^n:\;
\begin{aligned}
& Cu = Bu,\;\; \forall\,u\in U,\\
& Cv = Av,\;\; \forall\,v\in V
\end{aligned}
\right\}
\end{equation*}
Such constraints describe how a lower bound can match $A$ and $B$ along 
selected eigendirections determined by the inertia of $A-B$, and therefore highlight directions that dominate dissipation. 
However, for a matrix-valued passivity index, the underlying set 
$\mathcal{H}$ (or $\mathcal{K}$) contains infinitely many matrices, so no single passivity matrix can satisfy tangency constraints for all frequencies.

To capture the most dissipatively critical behavior, we impose the tangency 
constraint only at the weakest passivity direction—namely, at the frequency 
where the minimum eigenvalue of $H(\omega)$ is smallest and along its 
associated eigenvector. This yields a passivity matrix that, while not 
unique, faithfully reflects the system’s least dissipative direction and 
motivates the following eigenvalue-based selection principle.

For the IFP case, we denote $\underline{H}(\omega)$ as the matrix $H(\omega)$ whose
minimum eigenvalue is the smallest among all frequencies $\omega$, i.e.,
\begin{equation*}
    \omega_\star \in \operatorname*{arg}\operatorname*{min}_{\omega\in\mathbb{R}}
    \lambda_{\min}\bigl(H(\omega)\bigr),\qquad\underline{H}(\omega) := H(\omega_\star)
\end{equation*}

Let $\underline{v}(\omega)$ be the eigenvector of $\operatorname*{min}_{\omega\in\mathbb{R}}\lambda_{\min}\bigl(H(\omega)\bigr)$, and selecting the maximal lower bound $\Phi\in\mathcal{L}_{\max}(\mathcal{H})$ satisfying the tangency-constraint $\Phi\underline{v}(\omega)=\underline{H}(\omega)\underline{v}(\omega)$, which exactly compensates the system's weakest passivity direction. 
However, such tangency constraints are generally difficult to construct and solve. Therefore, in the LMI-based design we operationalize the same structural idea by maximizing the minimum eigenvalue of the passivity matrix~$\Phi$ under the lower-bound constraints. 
Note that maximizing only the minimum eigenvalue does not guarantee a Löwner-maximal solution; one may impose a secondary objective of maximizing the trace to ensure order-theoretic maximality. 
The result IFPM and OFPM are denoted as $\Phi^{\lambda}$ and $\Xi^{\lambda}$, respectively.
It is obvious that this selection principle is consistent with the scalar passivity index when evaluated via its minimum eigenvalue, that is:
\begin{equation*}
    \phi=\lambda_{\min}(\Phi^{\lambda}),\quad \xi=\lambda_{\min}(\Xi^{\lambda})
\end{equation*}

\begin{remark}
For a generic pair of symmetric matrices, the extremal lower bounds determined by the Löwner order are aligned with eigendirections specified by the inertia of $A-B$ and are therefore not, in general, compatible with the ambient coordinate axes.  As a consequence, a fully decoupled (input–output diagonal) passivity 
matrix rarely achieves maximality in the Löwner order. Representative passivity matrices must typically incorporate coupling across different channels that are intrinsic to attaining order-theoretic optimality.
\end{remark}

\section{Passivation and Stability Analysis Based on Passivity Matrices}\label{sec-stability}
Since passivity is preserved under parallel and feedback interconnection, passivity indices are widely used to analyze the passivity and stability of interconnected subsystems in a distributed manner. When passivity indices are extended from scalars to matrices, the fundamental passivity properties of interconnected systems remain valid, while the additional directional information enables more effective control design and stability assessment. This section first characterizes how the matrix-valued passivity indices of two interconnected subsystems evolve under interconnection. We then present a passivation method based on passivity matrices, followed by stability criteria for interconnected systems derived from matrix-valued passivity indices.


\subsection{Interconnection Properties of Passivity Matrices}
Consider two subsystems $\mathcal{G}_1$ and $\mathcal{G}_2$ with the form of either a time-invariant dynamical system represented by 
\begin{equation}
    \begin{cases}
        \dot{x}_i=f_i(x_i,e_i)\\
        y_i=h_i(x_i,e_i)
    \end{cases}
    \label{eq-subsystem}
\end{equation}
or a memoryless map $y_i=h_i(t,e_i)$. Suppose each subsystem is $(\Phi_i,\Xi_i)$-passive, that is:
\begin{equation}
    e_i^{\top}y_i\ge \dot{V}_i+e_i^{\top}\Phi_ie_i+y_i^{\top}\Xi_iy_i,\quad \text{for}\quad i=1,2
    \label{eq-interconnectionV}
\end{equation}
for some storage function $V_i$.

The parallel connection and the negative feedback interconnection of these two subsystems are shown in Fig.~\ref{fig-parallel}A.
\begin{figure}[htbp]
    \centering
    \includegraphics[width=\hsize]{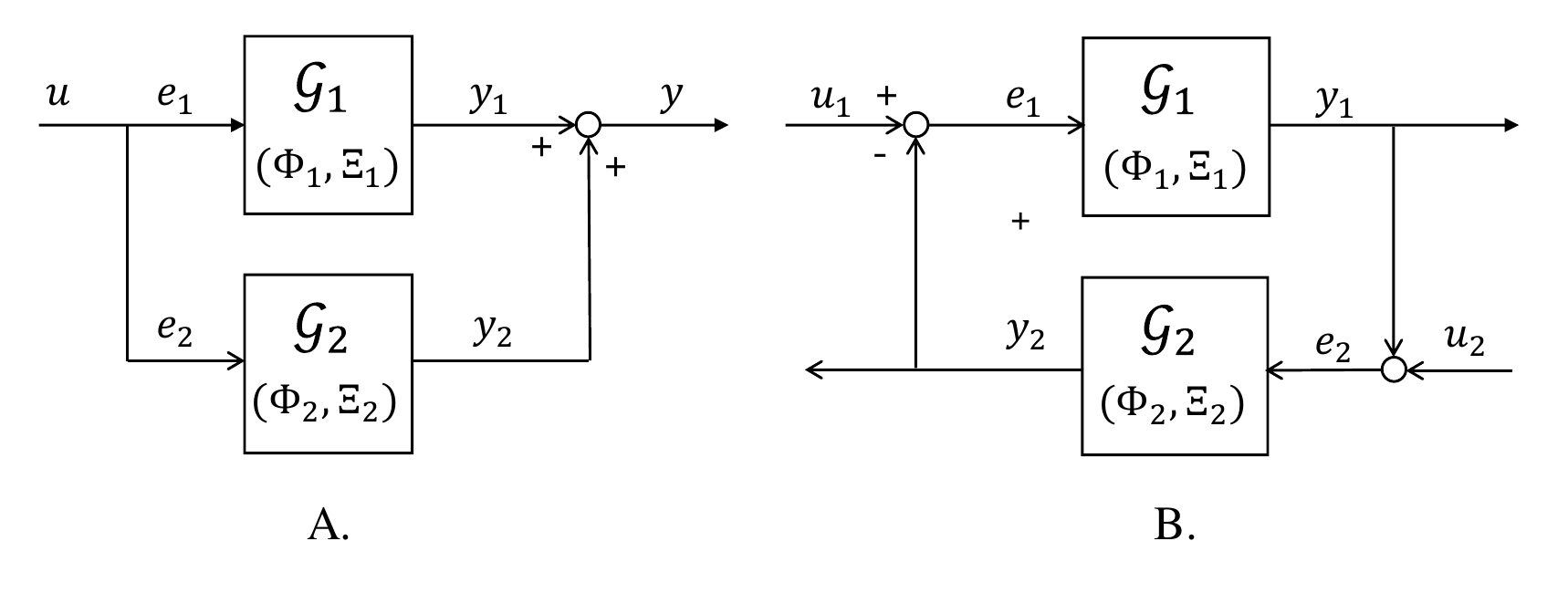}
    \caption{The block diagram of parallel connection (A) and feedback interconnection (B).}
    \label{fig-parallel}
\end{figure}

The next two theorems present systematic methods for characterizing the passivity matrices of interconnected systems.

\begin{theorem}[Parallel connection]
    Consider the parallel connection (see Fig.~\ref{fig-parallel}A) of two IF-OFP systems with the passivity matrices $(\Phi_1,\Xi_1)$ and $(\Phi_2,\Xi_2)$ respectively. If $\Xi_1$ and $\Xi_2$ are positive definite, the interconnected system with the input $u$ and the output $y$ is IF-OFP with the IFPM $\Phi=\Phi_1+\Phi_2$ and the OFPM $\Xi=(\Xi_1^{-1}+\Xi_2^{-1})^{-1}$
    \label{th-parallel connection}
\end{theorem}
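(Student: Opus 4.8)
The plan is to reduce the parallel-connection statement to a single algebraic inequality between quadratic forms and then combine it with the additivity of the two subsystems' dissipation inequalities. In the parallel connection both subsystems share the common input, $e_1=e_2=u$, and the output is the sum $y=y_1+y_2$; the composite state is $(x_1,x_2)$ with $\dot x_i=f_i(x_i,u)$ (for a memoryless block one simply takes $V_i\equiv 0$). I would take the candidate storage function $V(x_1,x_2):=V_1(x_1)+V_2(x_2)$, which is p.s.d.\ and continuously differentiable with $\dot V=\dot V_1+\dot V_2$ along trajectories, and add the two inequalities in~\eqref{eq-interconnectionV} to obtain
\begin{equation*}
u^{\top}y \;\ge\; \dot V + u^{\top}(\Phi_1+\Phi_2)u + \bigl(y_1^{\top}\Xi_1 y_1 + y_2^{\top}\Xi_2 y_2\bigr).
\end{equation*}
The IFPM claim $\Phi=\Phi_1+\Phi_2$ is then already established, and the only remaining work is to lower-bound the last bracket by $y^{\top}\Xi y$ with $\Xi=(\Xi_1^{-1}+\Xi_2^{-1})^{-1}$.

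The core lemma I would prove is the matrix harmonic-mean inequality: for $\Xi_1,\Xi_2\succ 0$ and arbitrary vectors,
\begin{equation*}
y_1^{\top}\Xi_1 y_1 + y_2^{\top}\Xi_2 y_2 \;\ge\; (y_1+y_2)^{\top}\bigl(\Xi_1^{-1}+\Xi_2^{-1}\bigr)^{-1}(y_1+y_2).
\end{equation*}
I would argue this by fixing $y:=y_1+y_2$, substituting $y_2=y-y_1$, and viewing the left side as a quadratic function $q(y_1)=y_1^{\top}(\Xi_1+\Xi_2)y_1-2y_1^{\top}\Xi_2 y+y^{\top}\Xi_2 y$. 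Completing the square about the stationary point $y_1^{\star}=(\Xi_1+\Xi_2)^{-1}\Xi_2 y$ gives $q(y_1)=(y_1-y_1^{\star})^{\top}(\Xi_1+\Xi_2)(y_1-y_1^{\star})+q(y_1^{\star})$; since $\Xi_1+\Xi_2\succ 0$ the first term is nonnegative, so $q(y_1)\ge q(y_1^{\star})$ for every $y_1$. It then remains to evaluate $q(y_1^{\star})=y^{\top}\Xi_2(\Xi_1+\Xi_2)^{-1}\Xi_1\,y$ (via $\Xi_2-\Xi_2(\Xi_1+\Xi_2)^{-1}\Xi_2=\Xi_2(\Xi_1+\Xi_2)^{-1}\Xi_1$) and to invoke the identity $\Xi_1^{-1}+\Xi_2^{-1}=\Xi_1^{-1}(\Xi_1+\Xi_2)\Xi_2^{-1}$, which yields $(\Xi_1^{-1}+\Xi_2^{-1})^{-1}=\Xi_2(\Xi_1+\Xi_2)^{-1}\Xi_1$ — exactly the stated OFPM.

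Chaining the two displays then produces $u^{\top}y\ge \dot V+u^{\top}\Phi u+y^{\top}\Xi y$ with $\Phi=\Phi_1+\Phi_2$ and $\Xi=(\Xi_1^{-1}+\Xi_2^{-1})^{-1}$, which is precisely the IF-OFP condition of Definition~\ref{de-matrix passivity indices} for the interconnected system; note $\Xi$ is well defined and positive definite because $\Xi_1,\Xi_2\succ 0$ makes $\Xi_1^{-1}+\Xi_2^{-1}\succ 0$. I expect the only genuine obstacle to be the harmonic-mean inequality, and even there the difficulty is bookkeeping rather than depth: since $\Xi_1$ and $\Xi_2$ need not commute, one must keep all factors in the correct left/right order through the completion of the square and when verifying $(\Xi_1^{-1}+\Xi_2^{-1})^{-1}=\Xi_2(\Xi_1+\Xi_2)^{-1}\Xi_1=\Xi_1(\Xi_1+\Xi_2)^{-1}\Xi_2$ (the last equality following from symmetry of the left-hand side). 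Everything else is the routine addition of the two dissipation inequalities.
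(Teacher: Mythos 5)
Your proposal is correct and follows essentially the same route as the paper: take $V=V_1+V_2$, add the two dissipation inequalities (which immediately gives $\Phi=\Phi_1+\Phi_2$), and reduce the OFPM claim to the matrix harmonic-mean inequality $y_1^{\top}\Xi_1 y_1+y_2^{\top}\Xi_2 y_2\ge (y_1+y_2)^{\top}(\Xi_1^{-1}+\Xi_2^{-1})^{-1}(y_1+y_2)$. The only difference is cosmetic: the paper verifies this inequality by exhibiting the block matrix $M=\begin{bmatrix}\Xi_1-N & -N\\ -N & \Xi_2-N\end{bmatrix}$ and checking $M\succeq 0$ via a Schur complement, which is exactly the block-matrix form of your completion-of-the-square-and-minimize-over-$y_1$ argument.
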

\begin{proof}
    The proof can be found in Appendix~\ref{sec-app-parallel}.
\end{proof}

\begin{remark}
    If the subsystems are IFP with $\Xi_i=0$, then the parallel interconnection remains IFP with the IFPM $\Phi=\Phi_1+\Phi_2$. If the subsystems are OFP with $\Phi_i=0$, Theorem~\ref{th-parallel connection} shows that the OFP property is guaranteed for the parallel interconnection if the OFPMs $\Xi_i$ are positive definite.
\end{remark}

\begin{theorem}[Negative feedback interconnection]
Consider the negative feedback interconnection (see Fig.~\ref{fig-parallel}B) of two IF-OFP systems with the passivity matrices $(\Phi_1,\Xi_1)$ and $(\Phi_2,\Xi_2)$ respectively. Then, the overall interconnected system with
\begin{equation*}
    u=\begin{bmatrix}
        u_1^\top& u_2^\top
    \end{bmatrix}^\top\quad \text{and} \quad y=\begin{bmatrix}
        y_1^\top& y_2^\top
    \end{bmatrix}^\top
\end{equation*}
is IF--OFP with the IFPM $\Phi$ and the OFPM $\Xi$ defined by:
\[
\Phi = \mathrm{diag}(M_1,\, M_2), \Xi = \mathrm{diag}(N_1,\, N_2)
\]
where $M_1, M_2, N_1, N_2 \in \mathbb{S}^m$ satisfy
\begin{equation}
    \begin{cases}
        M_1 \prec \Phi_1,\quad
        M_2 \prec \Phi_2\\[2mm]
        N_1 \preceq \Xi_1 - \Phi_2(\Phi_2 - M_2)^{-1} M_2\\[2mm]
        N_2 \preceq \Xi_2 - \Phi_1(\Phi_1 - M_1)^{-1} M_1
    \end{cases}
    \label{eq-feedbackIDFP}
\end{equation}
\label{th-feedback interconnection}
\end{theorem}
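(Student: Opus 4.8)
The plan is to write down the dissipation inequality of the interconnected system by summing the two subsystem inequalities~\eqref{eq-interconnectionV} and then imposing the feedback constraints. For the standard negative feedback interconnection with external inputs $u_1, u_2$, the interconnection equations are $e_1 = u_1 - y_2$ and $e_2 = u_2 + y_1$ (or the sign convention used in Fig.~\ref{fig-parallel}B). First I would add the two inequalities in~\eqref{eq-interconnectionV} to obtain
\[
\dot V_1 + \dot V_2 \le e_1^\top y_1 + e_2^\top y_2 - e_1^\top \Phi_1 e_1 - e_2^\top \Phi_2 e_2 - y_1^\top \Xi_1 y_1 - y_2^\top \Xi_2 y_2 .
\]
Substituting the interconnection relations into the cross term $e_1^\top y_1 + e_2^\top y_2$ should produce $u_1^\top y_1 + u_2^\top y_2$ plus terms that mix $y_1, y_2$ with $e_1, e_2$; the goal is to show the remainder is bounded below by $u^\top \Phi u + y^\top \Xi y$ with the block-diagonal $\Phi, \Xi$ claimed, i.e. that the ``leftover'' quadratic form in the internal signals is sign-definite in the right way.

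The key technical step is a completion-of-squares / Schur-complement argument on the internal variables. After substitution, the excess terms will involve quadratic forms in $e_i$ and $y_j$; I would group, for each channel $i$, the terms $e_i^\top(\Phi_i - M_i)e_i$ against the cross terms coupling $e_i$ to $y_j$ and the $y_j^\top \Xi_j y_j$ terms. Using $M_i \prec \Phi_i$ guarantees $\Phi_i - M_i \succ 0$ is invertible, so one can complete the square in $e_i$: a term of the form $e_i^\top(\Phi_i - M_i)e_i - 2 e_i^\top(\text{linear in }y_j)$ is minimized over $e_i$, and the minimum value is $-(\text{linear})^\top(\Phi_i-M_i)^{-1}(\text{linear})$, which is exactly the Schur-complement correction $\Phi_j(\Phi_j - M_j)^{-1}M_j$ (up to index bookkeeping) appearing in the bounds on $N_1, N_2$. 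Demanding that what remains dominate $u^\top \Phi u + y^\top\Xi y$ then forces precisely the inequalities $N_1 \preceq \Xi_1 - \Phi_2(\Phi_2-M_2)^{-1}M_2$ and $N_2 \preceq \Xi_2 - \Phi_1(\Phi_1-M_1)^{-1}M_1$. Since $V := V_1 + V_2$ is a valid p.s.d. storage function for the interconnection, this establishes $(\Phi,\Xi)$-passivity.

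I would then check the degenerate cases for consistency: when $M_i \to \Phi_i$ the corrections blow up and the $N_j$ bounds become vacuous (only pure IFP survives), while choosing $M_i$ strictly smaller trades input-feedforward excess for output-feedback excess—mirroring the classical scalar ``passivity transfer'' in feedback loops. The main obstacle I anticipate is purely bookkeeping: getting the sign conventions of the feedback interconnection right so that the cross terms cancel cleanly, and correctly matching which subsystem's $\Phi$ appears in which $N_j$ bound (note the ``crossed'' indexing: $N_1$ involves $\Phi_2, M_2$ and $N_2$ involves $\Phi_1, M_1$). A secondary subtlety is ensuring the completion of squares is valid as a pointwise-in-$(x,u)$ inequality rather than only after integration—this is fine because~\eqref{eq-interconnectionV} holds pointwise and the algebraic manipulation introduces no time integration. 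Once the signs are fixed, the argument is a direct quadratic-form manipulation with no further analytic input.
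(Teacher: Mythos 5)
Your proposal is correct and follows essentially the same route as the paper's proof: sum the storage functions, substitute the interconnection relations $e_1=u_1-y_2$, $e_2=u_2+y_1$, and reduce the leftover quadratic form to positive semidefiniteness of two $2\times 2$ block matrices via Schur complements — your completion of squares in $e_i$ is the same computation in slightly different coordinates, and it yields exactly the stated bounds on $N_1,N_2$ with the crossed indexing you noted. No gaps.
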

\begin{proof}
    The proof can be found in Appendix~\ref{sec-app-feedback}.
\end{proof}

\subsection{Passivation Based on Passivity Matrices}
A major motivation for introducing matrix-valued passivity indices is that they can be used directly in passivation control design.
For input–feedforward passivation, the required compensation can be obtained straightforwardly from Theorem~\ref{th-parallel connection}, making the design relatively simple.
In contrast, output–feedback passivation is inherently more involved, since the feedback interconnection couples the passivity properties of the subsystems in a nonlinear manner.
In practical scenarios, it is often sufficient to enforce passivity only with respect to a selected pair of input and output ports rather than with respect to the full set of external ports.
The following theorem characterizes such partial passivation in the case of matrix-valued passivity indices. 
\begin{theorem}
    Consider the feedback interconnection in Fig.~\ref{fig-parallel}B. Assume $u_2=0$. The closed-loop system is passive with respect to the input $u_1$ and output $y_1$ if the passivity matrices satisfy the conditions:
    \begin{equation}
        \begin{cases}
            \Phi_2+\Xi_1\succeq 0\\
            \Xi_2\succeq 0,\quad \Phi_1\succeq 0
        \end{cases}
        \label{eq-passivation}
    \end{equation}
    Furthermore, if $\Phi_1+\Xi_2\succ 0$, the closed-loop system has the IFPM $\Phi=\Phi_2+\Xi_1$ and the OFPM $\Xi=\Xi_2(\Phi_1+\Xi_2)^{-1}\Phi_1$.
    \label{th-passivation}
\end{theorem}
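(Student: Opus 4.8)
The plan is to run the standard ``add the storage functions and cancel the bilinear cross terms'' argument, and then to eliminate the internal signal $y_2$ by a completion of the square. Writing the negative feedback of Fig.~\ref{fig-parallel}B with $u_2=0$ gives the interconnection relations $e_1=u_1-y_2$ and $e_2=y_1$. Summing the two subsystem inequalities \eqref{eq-interconnectionV} for the candidate storage $V=V_1+V_2$, the cross terms $-y_2^\top y_1+y_1^\top y_2$ cancel, leaving the single dissipation inequality
\begin{equation*}
\dot V\le u_1^\top y_1-(u_1-y_2)^\top\Phi_1(u_1-y_2)-y_2^\top\Xi_2 y_2-y_1^\top(\Phi_2+\Xi_1)y_1 .
\end{equation*}
This inequality, which holds pointwise in $(x,u_1)$ for whatever value the internal signal $y_2$ takes, is the common starting point for both assertions.

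For the passivity claim I would simply invoke the sign conditions \eqref{eq-passivation}: $\Phi_1\succeq0$ makes the first bracketed quadratic nonnegative, $\Xi_2\succeq0$ the second, and $\Phi_2+\Xi_1\succeq0$ the third, so $\dot V\le u_1^\top y_1$, i.e.\ the closed loop is passive from $u_1$ to $y_1$. For the refined $(\Phi,\Xi)$-passivity statement the key step is to bound the $y_2$-dependent part uniformly in $y_2$. Grouping those terms as $y_2^\top(\Phi_1+\Xi_2)y_2-2u_1^\top\Phi_1 y_2+u_1^\top\Phi_1 u_1$ and using the hypothesis $\Phi_1+\Xi_2\succ0$, this is a strictly convex quadratic in $y_2$ whose minimum over $y_2$ is attained at $y_2^\star=(\Phi_1+\Xi_2)^{-1}\Phi_1 u_1$ with value $u_1^\top\bigl(\Phi_1-\Phi_1(\Phi_1+\Xi_2)^{-1}\Phi_1\bigr)u_1$. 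I would then apply the Schur-complement identity $\Phi_1-\Phi_1(\Phi_1+\Xi_2)^{-1}\Phi_1=\Xi_2(\Phi_1+\Xi_2)^{-1}\Phi_1$ (a symmetric matrix, hence equal to its symmetrization $\tfrac12[\Xi_2(\Phi_1+\Xi_2)^{-1}\Phi_1+\Phi_1(\Phi_1+\Xi_2)^{-1}\Xi_2]$) to put the leftover form into the advertised shape. Discarding the residual nonnegative square $(y_2-y_2^\star)^\top(\Phi_1+\Xi_2)(y_2-y_2^\star)$ then yields a dissipation inequality of the form $\dot V\le u_1^\top y_1-u_1^\top\Phi u_1-y_1^\top\Xi y_1$ with the two passivity matrices $\Phi_2+\Xi_1$ and $\Xi_2(\Phi_1+\Xi_2)^{-1}\Phi_1$ appearing in the statement, which establishes the IF-OFP property of the closed loop.

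The only genuinely delicate point is the elimination of $y_2$: one must check that the hypotheses really force $\Phi_1+\Xi_2\succ0$ so that the completed square (equivalently, the Schur complement) is well defined, and then verify the matrix identity that puts the leftover quadratic form into the advertised shape $\Xi_2(\Phi_1+\Xi_2)^{-1}\Phi_1$ --- noting that, since this form enters only through a quadratic $(\cdot)^\top(\cdot)(\cdot)$, no symmetry of the raw matrix product is required, nor is well-posedness of the closed loop needed in any stronger sense. Everything else --- the choice $V=V_1+V_2$, the cross-term cancellation, the bookkeeping of which penalty is carried by $u_1$ and which by $y_1$ under the sign convention of Fig.~\ref{fig-parallel}B, and the three elementary sign arguments for the passivity part --- is routine once the interconnection relations have been substituted.
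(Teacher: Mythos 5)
Your proof is correct and takes essentially the same route as the paper's: your completion of the square in $y_2$ is exactly the Schur-complement argument the paper applies to its block matrix $M'$ in the variables $(e_1,u_1)=(u_1-y_2,\,u_1)$, and the identity $\Phi_1-\Phi_1(\Phi_1+\Xi_2)^{-1}\Phi_1=\Xi_2(\Phi_1+\Xi_2)^{-1}\Phi_1$ is the same one the paper exploits. One caveat worth making explicit: your derivation (like the paper's own appendix) yields $\Xi_2(\Phi_1+\Xi_2)^{-1}\Phi_1$ as the coefficient of the $u_1$-quadratic and $\Phi_2+\Xi_1$ as that of the $y_1$-quadratic, which under Definition~\ref{de-matrix passivity indices} makes the former the IFPM and the latter the OFPM --- the reverse of the labels in the theorem statement --- so the naming in the statement appears to be swapped rather than anything being amiss in your argument.
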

\begin{proof}
    The proof can be found in Appendix~\ref{sec-app-passivation}.
\end{proof}

This reflects the complementarity between the two types of passivity: an output-feedback passivity shortage in one subsystem can be neutralized by an input-feedforward passivity surplus in the other along the shared interconnection channel. A key advantage of matrix-valued passivity indices is that they encode both passivity directions and multi-dimensional passivity intensities, thereby enabling the controller to compensate only where the system's passivity is deficient. This structural richness directly lowers the required control energy in passivation.
Consider a non-passive system $\mathcal{G}_1$ whose passivity shortage is characterized by an OFPM $\Xi_1\preceq 0$. To make the closed-loop system passive, the feedback controller $\mathcal{G}_2$ should contribute a compensating OFPM $\Xi_2\succeq 0$ and an IFPM $\Phi_2\succeq -\Xi_1$. A natural way to quantify the passivation effort is to measure the minimum control energy required to supply the necessary $\Phi_2 = -\Xi_1$. This can be expressed through the quadratic cost
\begin{equation*}
    J_{mx}(\Phi_2)=\int_0^{\infty}e_2(t)^{\top}\Phi_2e_2(t) \text{dt}=-\int_0^{\infty}e_2(t)^{\top}\Xi_1e_2(t) \text{dt}
\end{equation*}
which represents the energy injected by the controller to compensate for the passivity deficit of the system $\mathcal{G}_1$.

If only the scalar passivity index is employed, the compensation condition reduces to $\lambda_{\min}(\Phi_2)\ge-\lambda_{\min}(\Xi_1)$, which enforces an isotropic increase of dissipation in all directions, including those that do not require compensation. This yields the minimal cost $J_{sc}(\Phi_2)=-\lambda_{\min}(\Xi_1)\int e_2^{\top}e_2 \text{dt}$. In contrast, the matrix-valued formulation admits anisotropic compensation, and the resulting minimal cost $J_{mx}\le J_{sc}$, with equality occurring if and only if $\Xi_1=\xi_1I$. Hence, matrix-valued passivity indices achieve strictly lower passivation effort by avoiding unnecessary compensation in directions where the system is already sufficiently passive.

\subsection{Stability Analysis Based on Passivity Matrices}
In this subsection, we present several results on $\mathcal{L}_2$ and Lyapunov stability derived from passivity matrices.
\begin{lemma}
    If system~\eqref{eq-dynamic} is output strictly passive with an OFPM $\Xi\succ 0$, then it is finite-gain $\mathcal{L}_2$ stable and its $\mathcal{L}_2$ gain is less than or equal to $1/\lambda_{\min}(\Xi)$.
\end{lemma}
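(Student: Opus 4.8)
The plan is to start from the dissipation inequality characterizing output strict passivity and integrate it over a finite horizon, then extract the gain bound by Cauchy--Schwarz. By Definition~\ref{de-matrix passivity indices} with $\Phi=0$, the assumption $OSP(\Xi)$ means there is a continuously differentiable p.s.d.\ storage function $V$ with $\dot V \le u^\top y - y^\top \Xi y$ along all trajectories. Integrating from $0$ to $T$ and using $V\ge 0$ gives
\[
\int_0^T y(t)^\top \Xi\, y(t)\,\mathrm{d}t \;\le\; \int_0^T u(t)^\top y(t)\,\mathrm{d}t \;+\; V(x(0)).
\]

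Next I would lower-bound the left side spectrally: since $\Xi=\Xi^\top\succ 0$, we have $y^\top\Xi y \ge \lambda_{\min}(\Xi)\,\|y\|_2^2$ pointwise, hence $\lambda_{\min}(\Xi)\,\|y\|_{L^2[0,T]}^2 \le \langle u,y\rangle_{L^2[0,T]} + V(x(0))$. Upper-bounding the cross term by Cauchy--Schwarz, $\langle u,y\rangle_{L^2[0,T]} \le \|u\|_{L^2[0,T]}\,\|y\|_{L^2[0,T]}$, yields the scalar quadratic inequality
\[
\lambda_{\min}(\Xi)\,\|y\|_{L^2[0,T]}^2 \;-\; \|u\|_{L^2[0,T]}\,\|y\|_{L^2[0,T]} \;-\; V(x(0)) \;\le\; 0 .
\]
For the zero-initial-state case $V(x(0))=0$ this immediately gives $\|y\|_{L^2[0,T]} \le \frac{1}{\lambda_{\min}(\Xi)}\,\|u\|_{L^2[0,T]}$ for every $T$; letting $T\to\infty$ establishes finite-gain $\mathcal{L}_2$ stability with gain at most $1/\lambda_{\min}(\Xi)$. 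For nonzero initial state, solving the quadratic in $\|y\|_{L^2[0,T]}$ (or equivalently completing the square) produces an affine bound $\|y\|_{L^2[0,T]} \le \frac{1}{\lambda_{\min}(\Xi)}\|u\|_{L^2[0,T]} + \beta(x(0))$ with a bias constant $\beta$ independent of $T$, which is the standard finite-gain $\mathcal{L}_2$ stability estimate.

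I expect no serious obstacle here; the only points requiring a little care are (i) justifying that the integrated dissipation inequality holds for all admissible $u$ and all $T$ (which is exactly the content of Definition~\ref{de-matrix passivity indices} together with absolute continuity of $t\mapsto V(x(t))$ along trajectories), and (ii) making the $T$-uniformity explicit so that the bound passes to the limit and genuinely certifies $\mathcal{L}_2$ stability rather than just a finite-horizon estimate. If one wishes to state the result in the strict zero-bias form, it suffices to restrict to $x(0)=0$; otherwise the bias term is carried along as above.
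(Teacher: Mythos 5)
Your proposal is correct and follows essentially the same route as the paper: the paper's proof is exactly the one-line reduction $y^\top \Xi y \ge \lambda_{\min}(\Xi)\,\|y\|^2$ followed by an appeal to the standard scalar output-strict-passivity gain bound. You simply re-derive that standard scalar step explicitly (integration plus Cauchy--Schwarz and the quadratic inequality), which is fine and adds no new idea beyond what the paper invokes by citation.
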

This lemma follows immediately from the standard scalar result 
\(
u^\top y \ge \dot V + \rho \left\| y \right\|^2 
\;\Longrightarrow\; 
\left\| y \right\|_{\mathcal{L}_2} \le \tfrac{1}{\rho} \left\| u \right\|_{\mathcal{L}_2},
\)
by noting that 
\(
y^\top \Xi y \ge \lambda_{\min}(\Xi)\, \left\| y \right\|^2.
\)

The $\mathcal{L}_2$ stability of the feedback interconnection is summarized in the next theorem.
\begin{theorem}
    Consider the feedback interconnection (Fig.~\ref{fig-parallel}B) of two IF-OFP systems with the passivity matrices $(\Phi_1,\Xi_1)$ and $(\Phi_2,\Xi_2)$ respectively. Then, the closed-loop map from $u$ to $y$ is finite gain $\mathcal{L}_2$ stable if
    \begin{equation}
        \Phi_1+\Xi_2 \succ 0 \quad \text{and} \quad \Phi_2+\Xi_1 \succ 0
    \end{equation}
    \label{th-L2stability}
\end{theorem}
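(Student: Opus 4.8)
\emph{Proof plan.} The plan is to reproduce the classical scalar IF--OFP feedback-stability argument (as in~\cite{khalil2002nonlinear,van2000l2}) but to carry the quadratic forms defined by $\Phi_i$ and $\Xi_i$ through the whole computation instead of collapsing them to their smallest eigenvalues. Throughout I assume the interconnection is well posed and causal, so that for every exogenous input $u=(u_1^\top,u_2^\top)^\top$ in the extended space $L_{2e}$ the internal signals $e_i,y_i$ exist in $L_{2e}$, and I work with truncations on $[0,T]$. Writing the negative-feedback relations as $e_1=u_1-y_2$ and $e_2=u_2+y_1$, I first integrate the IF--OFP inequality~\eqref{eq-interconnectionV} of each subsystem over $[0,T]$; using $V_i\ge 0$ this yields $\langle e_i,y_i\rangle_T-\int_0^T e_i^\top\Phi_i e_i\,dt-\int_0^T y_i^\top\Xi_i y_i\,dt\ge -V_i(x_i(0))$ for $i=1,2$.

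Next I add the two integrated inequalities. The left-hand cross terms telescope, $\langle e_1,y_1\rangle_T+\langle e_2,y_2\rangle_T=\langle u_1,y_1\rangle_T+\langle u_2,y_2\rangle_T$, because the feedback contributes $-\langle y_2,y_1\rangle_T+\langle y_1,y_2\rangle_T=0$. Substituting $e_1=u_1-y_2$ and $e_2=u_2+y_1$ into the $\Phi_i$-quadratic terms and collecting the part that is quadratic in $y$ produces exactly the blocks $y_1^\top(\Xi_1+\Phi_2)y_1$ and $y_2^\top(\Xi_2+\Phi_1)y_2$, together with residual terms that are bilinear in $(u,y)$ or quadratic in $u$ with coefficient matrices $\Phi_1,\Phi_2$. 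This rearrangement is the heart of the argument and is precisely where the hypotheses enter: choosing $\epsilon>0$ with $\Xi_1+\Phi_2\succeq\epsilon I$ and $\Xi_2+\Phi_1\succeq\epsilon I$ lower-bounds the collected $y$-quadratic part by $\epsilon\|y\|_T^2$, where $\|y\|_T^2:=\|y_1\|_T^2+\|y_2\|_T^2$.

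It then remains to bound the residual terms from above. By Cauchy--Schwarz and the operator-norm estimates $|a^\top\Phi_i b|\le\|\Phi_i\|\,\|a\|\,\|b\|$, all bilinear and $u$-quadratic remainders are dominated by $\alpha\|y\|_T\|u\|_T+\beta_0\|u\|_T^2$ for explicit constants $\alpha,\beta_0$ depending only on $\|\Phi_1\|,\|\Phi_2\|$ (with $\|u\|_T^2:=\|u_1\|_T^2+\|u_2\|_T^2$). Combining the lower and upper bounds gives $\epsilon\|y\|_T^2\le\alpha\|y\|_T\|u\|_T+\beta_0\|u\|_T^2+\beta$ with $\beta=V_1(x_1(0))+V_2(x_2(0))\ge 0$. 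Young's inequality $\alpha\|y\|_T\|u\|_T\le\tfrac{\epsilon}{2}\|y\|_T^2+\tfrac{\alpha^2}{2\epsilon}\|u\|_T^2$ absorbs the cross term into the left side, leaving $\|y\|_T\le\gamma\|u\|_T+\beta'$ with $\gamma=\sqrt{(2/\epsilon)(\beta_0+\alpha^2/2\epsilon)}$ and $\beta'=\sqrt{2\beta/\epsilon}$, both independent of $T$; letting $T\to\infty$ gives finite-gain $\mathcal{L}_2$ stability of $u\mapsto y$. When $\Phi_1=\Phi_2=0$ (pure OFP subsystems) there are no residual $\Phi$-terms and the bound is immediate with $\gamma=1/\epsilon$, and replacing each $\Phi_i,\Xi_i$ by $\lambda_{\min}(\cdot)I$ recovers the scalar theorem of~\cite{khalil2002nonlinear}.

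I expect the main obstacle to be purely bookkeeping: correctly tracking the feedforward cross terms generated by $e_i=u_i\mp y_j$ inside the $\Phi_i$ quadratic forms, and verifying that the residual is genuinely subordinate—linear, not quadratic, in $\|y\|_T$—so that the Young-inequality absorption succeeds using only $\Xi_1+\Phi_2\succ 0$ and $\Xi_2+\Phi_1\succ 0$. A secondary point to state carefully is well-posedness and causality of the feedback loop, which justifies working in $L_{2e}$ and truncating; under the standing assumptions on $(f_i,h_i)$ this is taken as a hypothesis in the usual manner.
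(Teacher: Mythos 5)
Your proposal is correct and follows essentially the same route as the paper's proof: both substitute the interconnection relations into the quadratic forms, collect the output-quadratic part into the blocks $\Xi_1+\Phi_2$ and $\Xi_2+\Phi_1$, bound the residual bilinear and input-quadratic terms by norms of $\Phi_1,\Phi_2$, and absorb the cross term by completing the square (your Young's inequality step). The only cosmetic difference is that the paper manipulates the differential inequality $\dot V \le u^{\top}Ny-u^{\top}Mu-y^{\top}Ly$ pointwise in time and integrates at the end, whereas you integrate first and work with truncated inner products.
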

\begin{proof}
    The proof can be found in Appendix~\ref{sec-app-L2stability}.
\end{proof}

\begin{theorem}
    Consider the feedback interconnectionof two IF-OFP time-invariant dynamical systems of the form~\eqref{eq-dynamic}, as shown in Fig.~\ref{fig-parallel}B. Suppose each feedback component is zero-state observable and has the IFPM $\Phi_i$ and the OFPM $\Xi_i$. Then the origin of the closed-loop system for $u=0$ is asymptotically stable if
    \begin{equation}
        \Phi_1+\Xi_2 \succeq 0 \quad \text{and} \quad \Phi_2+\Xi_1 \succeq 0
        \label{eq-Lyapunovstability}
    \end{equation}
    Furthermore, if $V_1+V_2$ is radially unbounded, the origin is globally asymptotically stable.
    \label{th-Lyapunovstability}
\end{theorem}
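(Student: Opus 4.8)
The plan is to take the sum of the subsystem storage functions, $V = V_1 + V_2$, as a Lyapunov function for the closed loop with $u=0$, and then to strengthen Lyapunov stability to asymptotic stability via LaSalle's invariance principle together with zero-state observability. First I would add the two dissipation inequalities~\eqref{eq-interconnectionV} along closed-loop trajectories, giving $\dot V = \dot V_1+\dot V_2 \le e_1^\top y_1 + e_2^\top y_2 - e_1^\top\Phi_1 e_1 - y_1^\top\Xi_1 y_1 - e_2^\top\Phi_2 e_2 - y_2^\top\Xi_2 y_2$. For the negative feedback interconnection of Fig.~\ref{fig-parallel}B with external input $u=0$ one has $e_1=-y_2$ and $e_2=y_1$, so the bilinear cross terms cancel, $e_1^\top y_1 + e_2^\top y_2 = 0$, and substituting $e_1=-y_2$, $e_2=y_1$ into the remaining quadratic terms yields $\dot V \le -y_1^\top(\Phi_2+\Xi_1)y_1 - y_2^\top(\Phi_1+\Xi_2)y_2$. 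Under~\eqref{eq-Lyapunovstability} both $\Phi_2+\Xi_1$ and $\Phi_1+\Xi_2$ are positive semidefinite, so $\dot V \le 0$; combined with positive definiteness of $V$ (which follows from the zero-state observability of each component by a standard argument, passing to the available storage if necessary), this already gives Lyapunov stability of the origin.

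For asymptotic stability I would invoke LaSalle's invariance principle: every bounded trajectory approaches the largest invariant set $M$ contained in $\{x:\dot V = 0\}$. On $M$ the inequality chain above must hold with equality throughout, which forces $(\Phi_2+\Xi_1)^{1/2}y_1 \equiv 0$ and $(\Phi_1+\Xi_2)^{1/2}y_2 \equiv 0$, as well as tightness of each subsystem's own dissipation inequality. The next step is to upgrade this to $y_1 \equiv 0$ and $y_2 \equiv 0$ on $M$; granting that, $e_1=-y_2\equiv 0$ and $e_2=y_1\equiv 0$, so each subsystem obeys $\dot x_i = f_i(x_i,0)$ while remaining in $\{h_i(x_i,0)=0\}$, and zero-state observability forces $x_i \equiv 0$. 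Hence $M=\{0\}$ and the origin is asymptotically stable. If $V_1+V_2$ is in addition radially unbounded, the global version of LaSalle's theorem (Barbashin--Krasovskii) upgrades the conclusion to global asymptotic stability.

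The main obstacle is exactly the passage ``$\dot V\equiv 0$ on $M$ $\Rightarrow$ $y_i\equiv 0$''. Because~\eqref{eq-Lyapunovstability} asserts only positive \emph{semi}definiteness, $\dot V=0$ by itself merely confines $y_1$ to $\ker(\Phi_2+\Xi_1)$ and $y_2$ to $\ker(\Phi_1+\Xi_2)$ along $M$; one must still exclude nonzero output trajectories that live in these kernels and are consistent with the closed-loop dynamics (note that the feedback of two lossless components satisfies~\eqref{eq-Lyapunovstability} yet is only marginally stable, so the conclusion cannot come from $\dot V\le 0$ alone and genuinely requires the observability hypothesis). The decisive work is therefore to combine the kernel constraints, the induced tightness of each subsystem's dissipation equality, and the zero-state observability of the two components to drive the outputs to zero; the natural route is to differentiate the kernel identities along the flow restricted to $M$ and iterate until $y_1$ and $y_2$ are forced to vanish, after which the zero-state-observability closure above completes the proof.
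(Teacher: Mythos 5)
Your route is the same as the paper's: take $V=V_1+V_2$, use the interconnection relations $e_1=-y_2$, $e_2=y_1$ (for $u=0$) to cancel the bilinear cross terms and obtain $\dot V \le -y_1^\top(\Phi_2+\Xi_1)y_1-y_2^\top(\Phi_1+\Xi_2)y_2$, then combine zero-state observability with the invariance principle. The paper's proof simply asserts ``$\dot V=0\Rightarrow y_1=0,\,y_2=0$'' and moves on; you are right that this implication is precisely where the argument is incomplete, since under~\eqref{eq-Lyapunovstability} the matrices are only positive \emph{semi}definite, so $\dot V=0$ merely confines $y_1$ to $\ker(\Phi_2+\Xi_1)$ and $y_2$ to $\ker(\Phi_1+\Xi_2)$.

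However, the repair you sketch (differentiating the kernel identities along the flow on the invariant set and iterating until the outputs vanish) cannot succeed in general, because the statement itself fails in the semidefinite case. Take $\mathcal{G}_1$ and $\mathcal{G}_2$ to be integrators $\dot x_i=e_i$, $y_i=x_i$ with $V_i=\tfrac12 x_i^\top x_i$: each is lossless, hence $(0,0)$-passive, and zero-state observable, so every hypothesis of Theorem~\ref{th-Lyapunovstability} holds with $\Phi_i=\Xi_i=0\succeq 0$; yet the closed loop with $u=0$ is $\dot x_1=-x_2$, $\dot x_2=x_1$, a harmonic oscillator that is only marginally stable. Here the kernels are the whole output space, the set $\{\dot V=0\}$ is the entire state space, and your proposed iteration yields no new constraints. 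So the missing step is not a technicality you (or the authors) can supply: the conclusion requires strengthening~\eqref{eq-Lyapunovstability} to $\Phi_1+\Xi_2\succ 0$ and $\Phi_2+\Xi_1\succ 0$ (or some other hypothesis forcing $y_1=y_2=0$ on the invariant set), after which both your argument and the paper's close exactly as written: $\dot V=0\Rightarrow y_1=y_2=0\Rightarrow e_1=e_2=0$, zero-state observability of each component gives $x=0$, and LaSalle (with radial unboundedness of $V_1+V_2$ for the global claim) finishes. Your side remark that positive definiteness of $V$ must be secured separately (via zero-state observability and the available storage) is also correct and is glossed over in the paper.
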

\begin{proof}
    The proof can be found in Appendix~\ref{sec-app-Lyapunovstability}.
\end{proof}

\begin{remark}
Theorem~\ref{th-Lyapunovstability}  covers the special case where $\mathcal{G}_2$ is a static input--output map $y_2 = K e_2$ and is IF--OFP with passivity matrices $(\Phi_2,\Xi_2)$. Then $\mathcal{G_2}$ can be viewed as a time-invariant dynamical system with zero dynamics and storage
function $V_2 \equiv 0$. Hence, the (global) asymptotic stability conclusion remains valid under the same matrix conditions $\Phi_1 + \Xi_2 \succeq 0$ and $\Phi_2 + \Xi_1 \succeq 0$ when $\mathcal{G_2}$ is static.
\end{remark}
These results show that extending scalar passivity indices to matrix-valued indices preserves the main classical passivity properties while substantially strengthening them. The matrix formulation maintains full compatibility with standard interconnection and stability theorems, yet provides richer directional information, enabling less conservative stability conditions and lower passivation effort. Thus, matrix-valued indices offer a natural and strictly more powerful generalization of the classical scalar framework.

\section{Illustrative Examples}\label{sec-examples}
\subsection{Feedback Passivation}
We first consider the feedback interconnection shown in Fig.~\ref{fig-parallel}B with $u_2=0$, where $\mathcal{G}_1$ is chosen as a second-order LTI system with
\begin{eqnarray*}
    \begin{aligned}
        &A=\begin{bmatrix}
            -2 & 3 \\ -8 & -10
        \end{bmatrix}, &&B=\begin{bmatrix}
            -1.3 & 3.4 \\ 3.6 & -1.7
        \end{bmatrix},\\
        &C=\begin{bmatrix}
            8 & 9 \\ 10 & 7
        \end{bmatrix}, &&D=\begin{bmatrix}
            8 & 8 \\ 6 & -8
        \end{bmatrix}
    \end{aligned}
\end{eqnarray*}
The subsystem $\mathcal{G}_1$ is output feedback passive (OFP), and its scalar OFP index is computed as $\xi_1=-0.1095$. 
To obtain the matrix-valued OFP indices, we solve the LMI~\eqref{eq-LMI} using the two optimization objectives introduced in Section~\ref{sec-orderpropwety}. 
The OFPM obtained by maximizing the trace is denoted by $\Xi_{1}^{\mathrm{tr}}$, whereas the OFPM obtained by maximizing the minimum eigenvalue is denoted by $\Xi_{1}^{\lambda}$. 
The resulting matrices are
\begin{equation*}
    \Xi_{1}^{\mathrm{tr}}
    =
    \begin{bmatrix}
        0.0373 & 0.0618\\
        0.0618 & -0.0920
    \end{bmatrix},
    \quad
    \Xi_{1}^{\lambda}
    =
    \begin{bmatrix}
        -0.06127 & 0.0176\\
        0.0176 & -0.1029
    \end{bmatrix}
\end{equation*}
Their minimum eigenvalues are $-0.1167$ and $-0.1095$, respectively. 

Consider the feedback system $\mathcal{G}_2$ as a memoryless map
\begin{equation*}
    \mathcal{G}_2: y_2=\theta Ke_2
\end{equation*}
where $K\in\mathbb{R}^{2\times 2}$ is a fixed matrix which is selected independent of the dynamics of $\mathcal{G}_1$. The scalar $\theta\ge 0$ represents a tunable feedback strength. Since $\mathcal{G}_2$ is a static mapping, it is IFP, and its associated IFPM is simply $\Phi_2=\theta K$. The corresponding scalar IFP index is therefore given by $\phi_2=\lambda_{\min}(\theta K)$. We consider three representative choices of $K$, denoted by $K_1$–$K_3$:
\begin{equation*}
    \begin{aligned}
        K_1\!=\!\begin{bmatrix} 0.987 & 0.643 \\ 0.643 & 1.013 \end{bmatrix},
        K_2\!=\!\begin{bmatrix} 0.91 & 0.149 \\ 0.149 & 1.09 \end{bmatrix},
        K_3\!=\!\begin{bmatrix} 1 & 0 \\ 0 & 1 \end{bmatrix}
    \end{aligned}
\end{equation*}

We next investigate how the closed-loop passivity evolves as the feedback gain $\theta$ increases under different choices of the static mapping $K$. For each selected $K$, $\theta$ is gradually swept from small to large values, and the exact closed-loop passivity boundary is determined from the Hermitian part of the overall closed-loop transfer function. In parallel, the certified passivity thresholds are obtained from Theorem~\ref{th-passivation} using the scalar OFP index and the two matrix-valued OFPMs—one maximizing the trace and the other maximizing the minimum eigenvalue. A smaller critical value of~$\theta$ indicates that passivity can be achieved under weaker feedback, and the closer a certified transition point is to the true boundary, the more accurately the corresponding index reflects the system’s directional passivity properties in that specific scenario. The evolution of these thresholds and their comparison across different choices of $K$ are summarized in Fig.~\ref{fig-casesbstatic}.

\begin{figure}[htbp]
    \centering
    \includegraphics[width=\hsize]{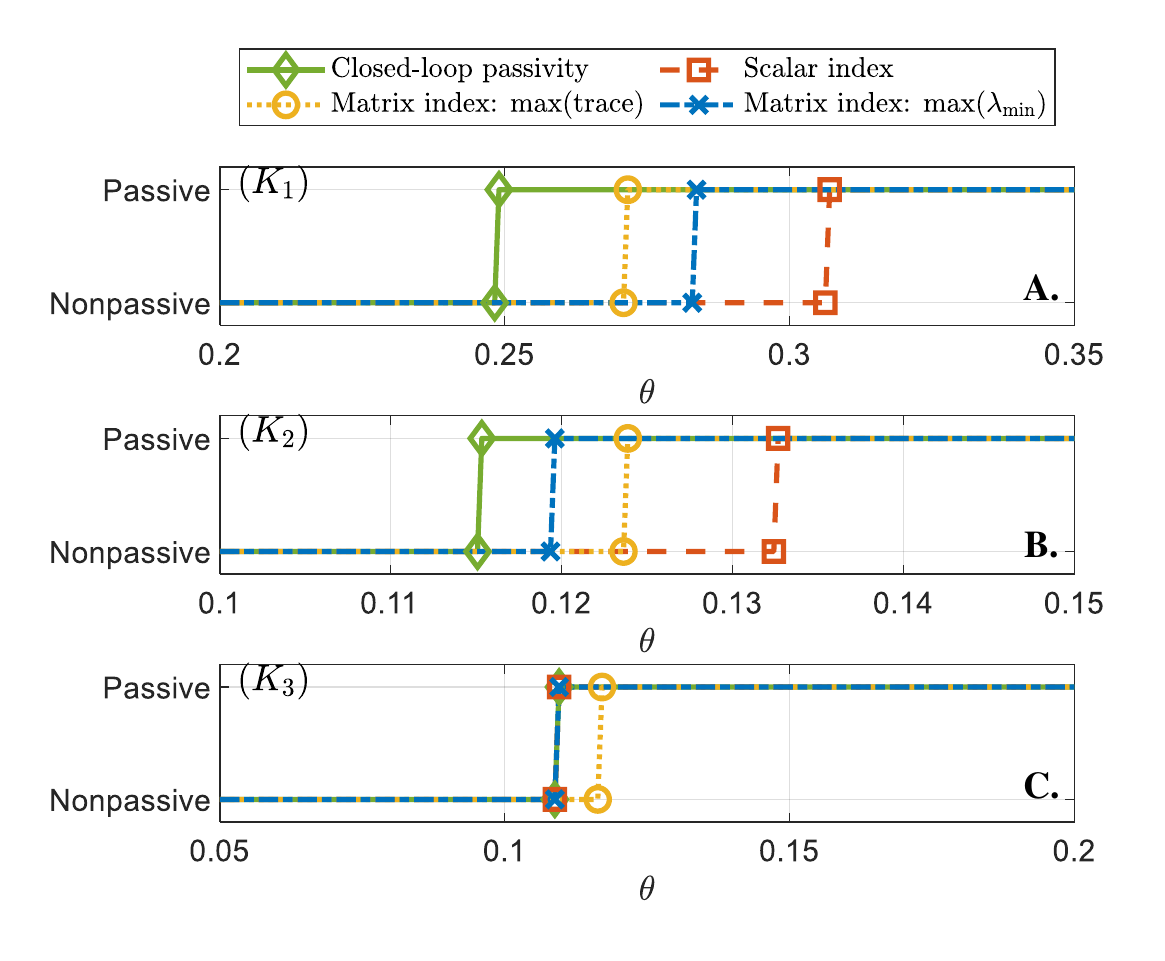}
    \caption{Passivation performance of scalar and matrix-valued passivity indices under different static feedback mappings: (A) $K_1$, (B) $K_2$, (C) $K_3$}.
    \label{fig-casesbstatic}
\end{figure}

Figure.~\ref{fig-casesbstatic}(A) shows that, for the feedback matrix $K_1$, the trace-maximizing 
passivity matrix $\Xi_1^{tr}$ yields the closest prediction to the true passivity threshold, while the minimum-eigenvalue-maximizing index $\Xi_1^{\lambda}$ is more conservative.  
In contrast, Fig.~\ref{fig-casesbstatic}(B) shows that the latter becomes more accurate for the 
feedback matrix $K_2$.  
These two cases highlight that the two matrix-valued indices may outperform one another under different anisotropic feedback structures introduced by $K$, but both consistently outperform the scalar passivity index in capturing the system's multidirectional passivity characteristics.

Figure.~\ref{fig-casesbstatic}(C) illustrates an extreme isotropic case. Here, the scalar index and the minimum-eigenvalue-maximizing index both match the true passivity boundary exactly, whereas the trace-maximizing index shows slightly inferior performance by sacrificing accuracy in the weakest passivity direction in exchange for a stronger overall index. Note that the scalar index aligns with the matrix index only in this special isotropic scenario, whereas the minimum-eigenvalue-maximizing matrix index is never worse than the scalar one.

These observations reflect a fundamental structural property of matrix-valued passivity indices: due to the partial-order structure described in Section~\ref{sec-orderpropwety}, there generally does not exist a single constant passivity matrix that simultaneously matches the system’s true passivity in all directions. Different candidate matrices capture different aspects of the system’s directional passivity.  
Therefore, if one aims to use passivity matrices for subsystem-wise decomposition and distributed analysis rather than relying on full closed-loop information, this structural conservativeness must be accepted.

\subsection{Stability Analysis}
Next, we consider a classical single-machine infinite-bus (SMIB) system, whose configuration follows the benchmark model in~\cite{li2012oscillation}. The synchronous generator is described by the standard
third–order model
\begin{equation*}
\label{eq:gen_model}
\begin{cases}
\dot{\delta} = \omega_0 \omega\\[2mm]
T_j \dot{\omega} = P_m - P_e - D \omega\\[2mm]
T_{d0}' \dot{E}_q' = E_f - E_q' - (x_d - x_d') I_d
\end{cases}
\end{equation*}
with the electrical algebraic relations $E_q' = U_q + x_d' I_d$ and $0 = U_d - x_q I_q$.
The above SMIB model is nonlinear, since the electrical variables involve $\sin{\delta}$ and $\cos{\delta}$ through the dq-axis transformation, which renders the electrical power $P_e$ a nonlinear function of the states. We analyze the passivity properties of the generator using its Hamiltonian realization~\cite{li2012oscillation}. The inputs and outputs are chosen as $u = [\,P_m\;\; E_f\,]^{\top},\qquad
y = [\,\omega_0\omega\;\; (E_q'/(x_d-x_d'))\,]^{\top}$.
One can verify that the generator is OFP with respect to the Hamiltonian
\[
H = \tfrac{1}{2} T_j \omega_0 \omega^{2}
    + \int P_e \, \mathrm{d}\delta
    + \int \frac{E_q'}{x_d - x_d'} \, \mathrm{d}E_q'
\]
and the OFPM obtained from 
$\dot{H} \le y^\top u - y^\top \Xi_{\mathrm{gen}} y$ is
\[
\Xi_{\mathrm{gen}}=\mathrm{diag}\!\left[
    {D}/{\omega_0},\;
    T_{d0}'(x_d - x_d')\right]
\]

The parameters of the SMIB generator are chosen as $\omega_0 = 2\pi\times 50,\quad T_j = 15,\quad D = 8,\quad T_{d0}' = 5, x_d = 0.5,\quad x_q = 0.5,\quad x_d' = 0.35,\quad U = 1.0$
and the operating point is $P_m = 0.8, E_f = 1.2$. Consider a static output feedback
\[
K=\begin{bmatrix} K_{11} & 0.1 \\[1mm] 0.1 & K_{22} \end{bmatrix}
\]
which is interconnected with the generator in a negative-feedback configuration.
By varying the values of $K_{11}$ and $K_{22}$, we evaluate the stability of the overall interconnection under both the scalar and matrix passivity indices. Using Theorem~\ref{th-Lyapunovstability}, Fig.~\ref{fig-genregion} compares the estimated stability regions in the $(K_{11}-K_{22})$ plane certified by the scalar passivity indices (Fig.~\ref{fig-genregion}A) and the matrix-valued passivity indices (Fig.~\ref{fig-genregion}B). The red areas indicate the controller gains that can be certified as stable by the corresponding passivity condition, the blue areas denote the unstable region identified by eigenvalue analysis, and the black curve represents the small-signal stability boundary defined by $\max \Re\{\lambda(A)\} = 0$. The matrix-valued index certifies a much larger admissible region than the scalar one, demonstrating a less conservative assessment; Cases 1–4 mark the gains used for time-domain validation.


\begin{figure}[htbp]
    \centering
    \includegraphics[width=\hsize]{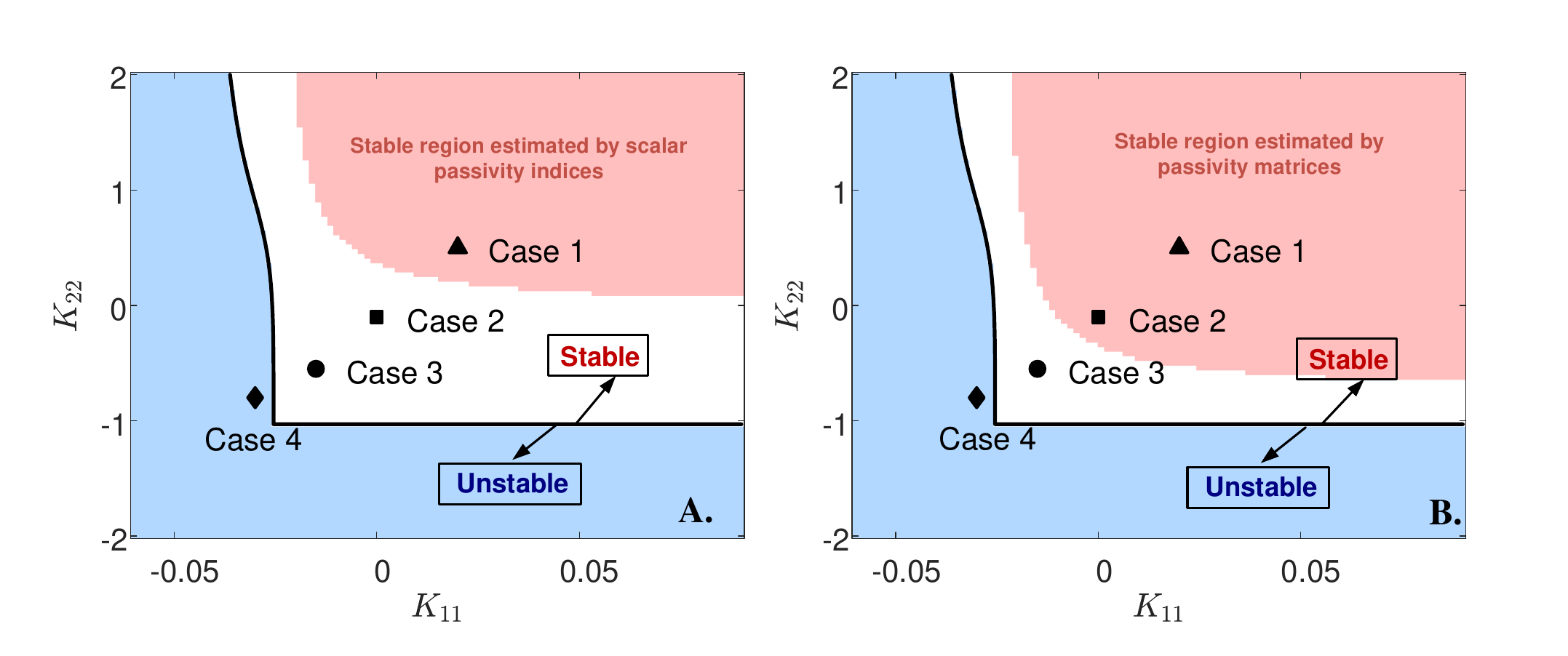}
    \caption{Stability regions of the SMIB system identified by the scalar passivity indices (A) and the matrix-valued passivity indices (B).}
    \label{fig-genregion}
\end{figure}

To further validate the results, four representative controller gains $K$ are selected for time-domain simulations: (Case~1) both indices certify stability, (Case~2) only the matrix index certifies stability, (Case~3) neither index can certify stability, and (Case~4) the linearized eigenvalue test indicates small-signal instability at the corresponding equilibrium. For each case, the closed-loop equilibrium $x^*(K)$ is first computed, and the simulation is initialized at a fixed-distance perturbation $x_0=x^*(K)+rd$ with the same radius $r=0.03$ and direction $d$ for all cases, thus all cases are tested under an identical small-signal disturbance level around their own equilibria. The system is then simulated over the same time window, and the transient responses $\delta(t)$, $\omega(t)$, and $P_e(t)$ are reported in Fig.~\ref{fig-simulation}. The corresponding transient responses show that Cases 1–3 are stable, with different damping levels, while Case 4 exhibits divergent oscillations and becomes unstable. These time-domain responses provide additional evidence that the stability region certified by the passivity matrix is less conservative than that of the scalar index.


\begin{figure}[htbp]
    \centering
    \includegraphics[width=\hsize]{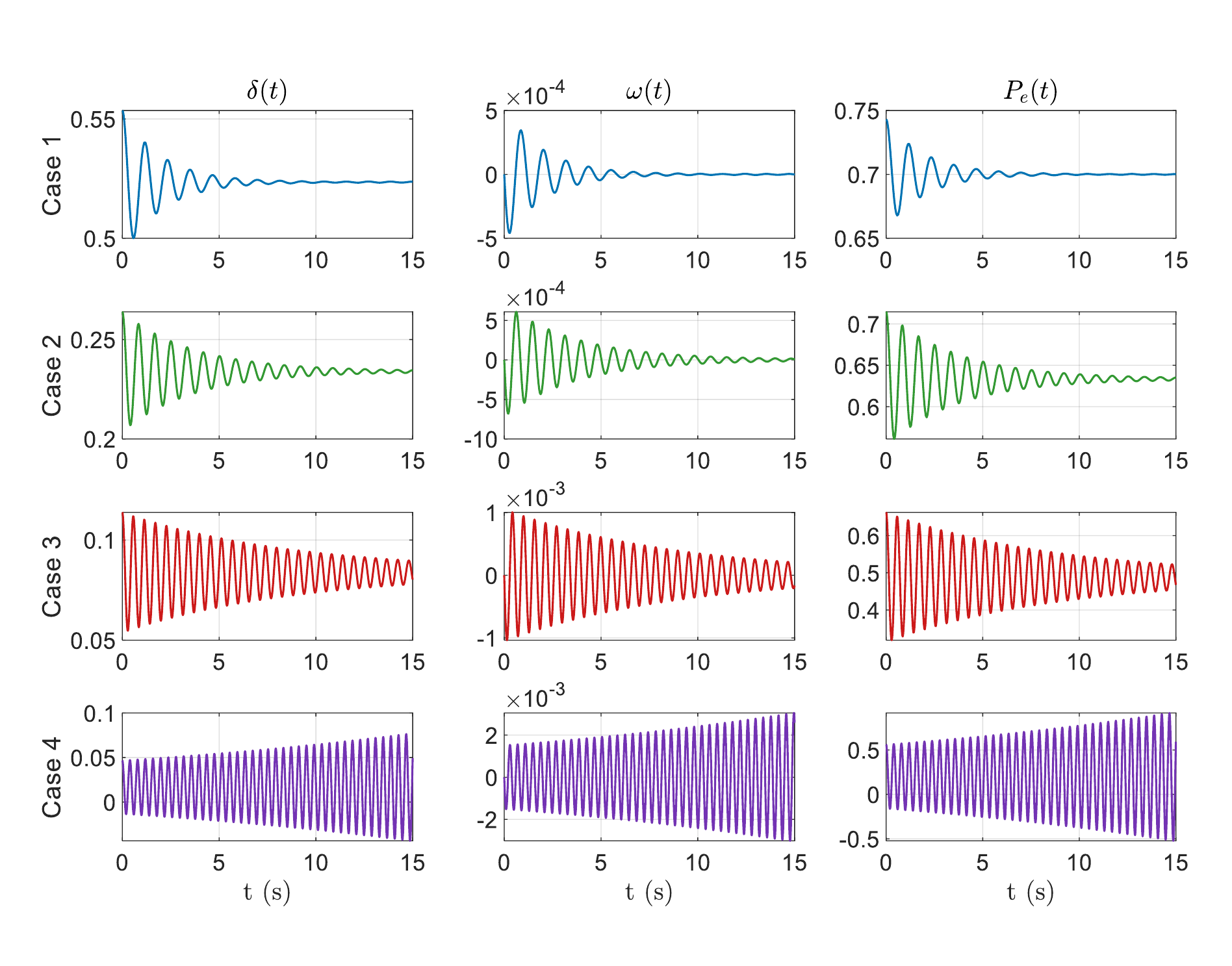}
    \caption{Time-domain simulation of the SMIB system under the four representative controller gains (Case 1–4).}
    \label{fig-simulation}
\end{figure}

\section{Conclusion}\label{sec-conclusion}
In this paper, we have established a systematic framework of matrix-valued passivity indices by extending classical scalar indices. The passivity matrices can better accommodate the directional complexity of MIMO systems. By interpreting passivity matrices in terms of the curvature of the dissipative functional, we have provided a rigorous geometric foundation that reveals both the intensity and directionality of energy dissipation. This finding sheds new light on understanding the intrinsic energy structure of passivizable systems.
For linear time-invariant systems, the application of the Loewner partial order, combined with proposed selection principles such as trace maximization and minimum eigenvalue maximization, has enabled a tractable computation of these indices via LMIs.

Theoretical analysis and numerical validations demonstrate that the proposed matrix-valued indices can capture cross-channel coupling, thereby allowing for less conservative stability assessments and reducing the necessary control effort in passivation tasks by avoiding isotropic over-compensation.

\renewcommand{\baselinestretch}{0.92}
\bibliographystyle{IEEEtran}
\bibliography{reference.bib}
\renewcommand{\baselinestretch}{0.93}


\appendices

\renewcommand{\thedefinition}{\thesection.\arabic{definition}}
\renewcommand{\theassumption}{\thesection.\arabic{assumption}}
\renewcommand{\thelemma}{\thesection.\arabic{lemma}}
\renewcommand{\theremark}{\thesection.\arabic{remark}}
\renewcommand{\thetheorem}{\thesection.\arabic{theorem}}
\renewcommand{\thecondition}{\thesection.\arabic{condition}}
\renewcommand{\theproposition}{\thesection.\arabic{proposition}}

\setcounter{definition}{0}
\setcounter{assumption}{0}
\setcounter{lemma}{0}
\setcounter{remark}{0}
\setcounter{theorem}{0}
\setcounter{condition}{0}
\setcounter{proposition}{0}

\counterwithin{equation}{section}
\section{Proof of Lemma~\ref{le-order}}\label{sec-app-orderstructure}
\begin{proof}
    The OFP of system~\eqref{eq-dynamic} indicates that
    \begin{equation}
        \dot{V}\le u^{\top}y-y^{\top}\rho(y), \forall(x,u)\in\mathcal{X}\times\mathcal{U}
        \label{eq-app-order1}
    \end{equation}
    Since $\rho(y)$ belongs to the sector $[\Xi,\infty]$, we have $y^\top[\rho(y)-\Xi y]\ge 0,\quad\forall(x,u)\in\mathcal{X}\times\mathcal{U}$.
    Substituting it into~\eqref{eq-app-order1} yields
    \begin{equation*}
        \dot{V}\le u^{\top}y- y^{\top}\Xi y,\quad\forall(x,u)\in\mathcal{X}\times\mathcal{U}
    \end{equation*}
    The proof for the IFP case is analogous and omitted here.
\end{proof}

\section{Proof of Theorem~\ref{th-parallel connection}}\label{sec-app-parallel}
\begin{proof}
    Let $V=V_1+V_2$. According to~\eqref{eq-interconnectionV},  the interconnection relative $u=e_1=e_2$ and $y=y_1+y_2$ yields
    \begin{equation*}
        \begin{aligned}
            \dot{V}=&\dot{V}_1+\dot{V}_2 \\
            \le& e_1^{\top} y_1 - e_1^{\top} \Phi_1 e_1-y_1^{\top}\Xi_1y_1+e_2^{\top} y_2 - e_2^{\top} \Phi_2 e_2-y_2^{\top}\Xi_2y_2\\
            =&u^{\top}(y_1+y_2)-u^{\top}(\Phi_1+\Phi_2)u-\begin{bmatrix}
                y_1\\y_2
            \end{bmatrix}^{\top}\begin{bmatrix}
                \Xi_1&0\\0&\Xi_2
            \end{bmatrix}\begin{bmatrix}
                y_1\\y_2
            \end{bmatrix}
        \end{aligned}
    \end{equation*}
    Denote $N=(\Xi_1^{-1}+\Xi_2^{-1})^{-1}$, the quadratic form of $[y_1,y_2]$ could be rewritten as
    \begin{equation*}
        \begin{aligned}
            &-\begin{bmatrix}
                y_1\\y_2
            \end{bmatrix}^{\top}\begin{bmatrix}
                \Xi_1&0\\0&\Xi_2
            \end{bmatrix}\begin{bmatrix}
                y_1\\y_2
            \end{bmatrix}
            =-y^{\top}Ny-\begin{bmatrix}
                y_1\\y_2
            \end{bmatrix}^{\top}M\begin{bmatrix}
                y_1\\y_2
            \end{bmatrix}
        \end{aligned}
    \end{equation*}
    where
    \[
        M :=
        \begin{bmatrix}
            \Xi_1-N & -N\\[2pt]
            -N      & \Xi_2-N
        \end{bmatrix}
    \]
    It remains to show that the block matrix $M$ is positive semidefinite. Note that $\Xi_1\succ 0$ and $\Xi_2\succ 0$, so we have $\Xi_1^{-1}\succ 0$, $\Xi_2^{-1}\succ 0$, hence $\Xi_1^{-1}\prec \Xi_1^{-1}+\Xi_2^{-1}$. This is equivalent to:
    \begin{equation}
        \Xi_1\succ (\Xi_1^{-1}+\Xi_2^{-1})^{-1}=N
        \label{eq-proof1.1}
    \end{equation}
    We also have completely symmetrical results for $\Xi_2$: $\Xi_2\succ N$.
    The Schur complement of $\Xi_1-N$ is:
    \begin{equation}
        \begin{aligned}
            &\Xi_2-N-N(\Xi_1-N)^{-1}N\\
            =&\Xi_2-N(\Xi_1-N)^{-1}(\Xi_1-N+N)\\
            =&\Xi_2-(\Xi_1N^{-1}-I)^{-1}\Xi_1
            =\Xi_2-(N^{-1}-\Xi_1^{-1})^{-1}\\
            =&\Xi_2-\Xi_2=0
        \end{aligned}
        \label{eq-proof1.2}
    \end{equation}
    According to~\eqref{eq-proof1.1} and~\eqref{eq-proof1.2}, we have $M\succeq 0$, thus
    \begin{equation*}
        \dot{V}\le u^{\top}y-u^{\top}(\Phi_1+\Phi_2)u-y^{\top}(\Xi_1^{-1}+\Xi_2^{-1})^{-1}y
    \end{equation*}
    which completes the proof.
\end{proof}

\section{Proof of Theorem~\ref{th-feedback interconnection}}\label{sec-app-feedback}
\begin{proof}
    Let $V=V_1+V_2$. According to~\eqref{eq-interconnectionV} and use the interconnection relations $e_1=u_1-y_2$ and $e_2=u_2+y_1$, we have
    \begin{equation}
        \begin{aligned}
            \dot{V}=&\dot{V}_1+\dot{V}_2 \\
            \le& e_1^{\top} y_1 - e_1^{\top} \Phi_1 e_1-y_1^{\top}\Xi_1y_1+e_2^{\top} y_2 - e_2^{\top} \Phi_2 e_2-y_2^{\top}\Xi_2y_2\\
            =&(u_1-y_2)^{\top}y_1+(u_2+y_1)^{\top}y_2-(u_1-y_2)^{\top}\Phi_1(u_1-y_2)\\
            &-(u_2+y_1)^{\top}\Phi_2(u_2+y_1)-y_1^{\top}\Xi_1y_1-y_2^{\top}\Xi_2y_2\\
            =&u^{\top}y-u^{\top}\begin{bmatrix}
                M_1 & 0\\0 & M_2
            \end{bmatrix}u-y^{\top}\begin{bmatrix}
                N_1 & 0\\0 & N_2
            \end{bmatrix}y\\
            &-\begin{bmatrix}
                u_1\\y_2
            \end{bmatrix}^{\top}E\begin{bmatrix}
                u_1\\y_2
            \end{bmatrix}-\begin{bmatrix}
                u_2\\y_1
            \end{bmatrix}^{\top}F\begin{bmatrix}
                u_2\\y_1
            \end{bmatrix}
        \end{aligned}
        \label{eq-prooffb1}
    \end{equation}
    where
    \begin{equation*}
    \begin{aligned}
        E&=\begin{bmatrix}
                \Phi_1-M_1 & -\Phi_1\\-\Phi_1 & \Xi_2+\Phi_1-N_2
            \end{bmatrix},\\
        F&=\begin{bmatrix}
                \Phi_2-M_2 & \Phi_2\\ \Phi_2 & \Xi_1+\Phi_2-N_1
            \end{bmatrix}
    \end{aligned}
    \end{equation*}
    Since $M_1, M_2, N_1, N_2 \in \mathbb{S}^m$ satisfy~\eqref{eq-feedbackIDFP}, we have:
    \begin{equation*}
    \begin{aligned}
        N_1&\preceq\Xi_1+\Phi_2(\Phi_2-M_2)^{-1}(\Phi_2-M_2-\Phi_2)\\
        &=\Xi_1+\Phi_2-\Phi_2(\Phi_2-M_2)^{-1}\Phi_2\\
        N_2&\preceq\Xi_2+\Phi_1(\Phi_1 - M_1)^{-1}(\Phi_1-M_1-\Phi_1)\\
        &=\Xi_2+\Phi_1-\Phi_1(\Phi_1 - M_1)^{-1}\Phi_1\\
    \end{aligned}
    \end{equation*}
    The above two inequalities together with $\Phi_1-M_1\succ 0$ and $\Phi_2-M_2\succ 0$ imply, by the Schur complement lemma, that $E\succeq 0$ and $F\succeq 0$, thus
    \begin{equation*}
        \dot{V}=u^{\top}y-u^{\top}\begin{bmatrix}
                M_1 & 0\\0 & M_2
            \end{bmatrix}u-y^{\top}\begin{bmatrix}
                N_1 & 0\\0 & N_2
            \end{bmatrix}y
    \end{equation*}
    which completes the proof.
\end{proof}

\section{Proof of Theorem~\ref{th-passivation}}\label{sec-app-passivation}
\begin{proof}
    Setting $u_2=0$ in~\eqref{eq-prooffb1} yields
    \begin{equation}
        \begin{aligned}
            \dot{V}=&\dot{V}_1+\dot{V}_2 \\
            \le&(u_1-y_2)^{\top}y_1+y_1^{\top}y_2-e_1^{\top}\Phi_1e_1\\
            &-y_1^{\top}\Phi_2y_1-y_1^{\top}\Xi_1y_1-(u_1-e_2)^{\top}\Xi_2(u_1-e_2)\\
            =&u_1^{\top}y_1-y_1^{\top}(\Xi_1+\Phi_2)y_1-\begin{bmatrix}
                e_1\\u_1
            \end{bmatrix}^{\top}M\begin{bmatrix}
                e_1\\u_1
            \end{bmatrix}
        \end{aligned}
        \label{eq-proofpassivation1}
    \end{equation}
    where
    \begin{equation*}
        M=\begin{bmatrix}
            \Phi_1+\Xi_2 & -\Xi_2\\-\Xi_2 & \Xi_2
        \end{bmatrix}
    \end{equation*}
    Since the passivity matrices satisfy~\eqref{eq-passivation}, we have $\Xi_2\succeq 0$ and $\Phi_1+\Xi_2-\Xi_2\Xi_2^{-1}\Xi_2=\Phi_1\succeq 0$, thus $M\succeq 0$, $\dot{V}\le u_1^{\top}y_1$, which indicates that the closed loop system is passive with $u_1$ and $y_1$.
    In addition,~\eqref{eq-proofpassivation1} could be arranged as
    \begin{equation*}
        \begin{aligned}
            \dot{V}\le& u_1^{\top}y_1-u_1^{\top}\left[\Xi_2(\Phi_1+\Xi_2)^{-1}\Phi_1\right] u_1\\
            &-y_1^{\top}(\Xi_1+\Phi_2)y_1-\begin{bmatrix}
                e_1\\u_1
            \end{bmatrix}^{\top}M'\begin{bmatrix}
                e_1\\u_1
            \end{bmatrix}
        \end{aligned}
    \end{equation*}
    where
    \begin{equation*}
        M'=\begin{bmatrix}
            \Phi_1+\Xi_2 & -\Xi_2\\-\Xi_2 & \Xi_2-\Xi_2(\Phi_1+\Xi_2)^{-1}\Phi_1
        \end{bmatrix}
    \end{equation*}
    Since $\Phi_1+\Xi_2\succ 0$ and the Schur complement of $M'$ with $\Phi_1+\Xi_2$ is
    \begin{equation*}
        \begin{aligned}
            &\Xi_2-\Xi_2(\Phi_1+\Xi_2)^{-1}\Phi_1-\Xi_2(\Phi_1+\Xi_2)^{-1}\Xi_2\\
            =&\Xi_2-\Xi_2(\Phi_1+\Xi_2)^{-1}(\Phi_1+\Xi_2)=0
        \end{aligned}
    \end{equation*}
    Then, $M'\succeq 0$, and~\eqref{eq-proofpassivation1} becomes
    \begin{equation*}
        \dot{V}\le u_1^{\top}y_1-\left[\Xi_2(\Phi_1+\Xi_2)^{-1}\Phi_1\right] u_1-y_1^{\top}(\Xi_1+\Phi_2)y_1
    \end{equation*}
\end{proof}

\section{Proof of Theorem~\ref{th-L2stability}}\label{sec-app-L2stability}
\begin{proof}
    Rearranging the terms in~\eqref{eq-prooffb1} yields
    \begin{equation*}
        \begin{aligned}
            \dot{V}\le u^{\top}Ny-u^{\top}Mu-y^{\top}Ly
        \end{aligned}
    \end{equation*}
    where
    \begin{equation*}
    \begin{aligned}
        N&=\begin{bmatrix}
            I & 2\Phi_1\\-2\Phi_2 & I
        \end{bmatrix},M=\begin{bmatrix}
            \Phi_1 &0\\0& -\Phi_2
        \end{bmatrix},\\
        L&=\begin{bmatrix}
            \Xi_1+\Phi_2&0\\0&\Xi_2+\Phi_1
        \end{bmatrix}
    \end{aligned}
    \end{equation*}
    Let $a=\lambda_{\min}(L)$, $b=\|N\|_2\ge 0$ and $c=\|M\|_2\ge 0$. Then
    \begin{equation*}
    \begin{aligned}
        \dot V
        &\le -a\|y\|_2^2 + b\|u\|_2\|y\|_2 + c\|u\|_2^2 \\
        &= -\frac{1}{2a}\bigl(b\|u\|_2 - a\|y\|_2\bigr)^2
           + \frac{b^2}{2a}\|u\|_2^2 - \frac{a}{2}\|y\|_2^2
           + c\|u\|_2^2 \\
        &\le \frac{b^2+2ac}{2a}\|u\|_2^2 - \frac{a}{2}\|y\|_2^2 .
    \end{aligned}
    \end{equation*}
    Integrating over $[0,\tau]$, using $V(x)\ge 0$, and taking the square roots, we arrive at
    \[
    \|y_\tau\|_{\mathcal{L}_2}
    \le \sqrt{\frac{b^2+2ac}{a}}\,\|u_\tau\|_{\mathcal{L}_2}
    + \sqrt{\frac{2V(x(0))}{a}},
    \]
    which completes the proof.
\end{proof}

\section{Proof of Theorem~\ref{th-Lyapunovstability}}\label{sec-app-Lyapunovstability}
\begin{proof}
    Select $V=V_1+V_2$ as the Lyapunov function candidate. Setting $u=0$ in~\eqref{eq-prooffb1} we have
    \begin{equation*}
    \begin{aligned}
        \dot{V}\le&-y_2^{\top}\Phi_1y_2-y_1^{\top}\Phi_2y_1-y_1^{\top}\Xi_1y_1-y_2^{\top}\Xi_2y_2\\
        =&-y_1^{\top}(\Phi_2+\Xi_1)y_1-y_2^{\top}(\Phi_1+\Xi_2)y_2
    \end{aligned}
    \end{equation*}
    The condition~\eqref{eq-Lyapunovstability} shows that $\dot{V}\le 0$ and $\dot{V}=0\Rightarrow y_1=0,y_2=0$. The zero-state observability of the subsystems implies that $\dot{V}=0\Rightarrow x=0$. The conclusion follows from the invariance principle.
\end{proof}

\end{document}